\DeclareMathOperator\supp{supp}
\theoremstyle{definition}
\newtheorem{exam}{Example}
\newtheorem{prop}[exam]{Proposition}
\newtheorem{cor}[exam]{Corollary}
\newtheorem{remark}[exam]{Remark}
\begin{document}

\title{Stochastic model for barrier crossings and fluctuations in local timescale.}
\author{Rajeev Bhaskaran}
\affiliation{Statistics and Mathematics Unit, Indian Statistical Institute, 8th Mile Mysore Road, Bangalore 560059, India}
\author{Vijay Ganesh Sadhasivam}
\affiliation{Statistics and Mathematics Unit, Indian Statistical Institute, 8th Mile Mysore Road, Bangalore 560059, India}
\affiliation{Yusuf Hamied Department of Chemistry, University of Cambridge, Lensfield Road, Cambridge,
CB2 1EW, UK}
\date{March 2022}

\begin{abstract}
The problem of computing the rate of diffusion-aided activated barrier crossings between metastable states is one of broad relevance in physical sciences. The \textit{transition path formalism} aims to compute the rate of these events by analysing the statistical properties of the transition path between the two metastable regions concerned. In this paper, we show that the transition path process is a unique solution to an associated stochastic differential equation (SDE), with a discontinuous and singular drift term. The singularity arises from a \textit{local time} contribution, which accounts for the fluctuations at the boundaries of the metastable regions. The presence of fluctuations at the local time scale calls for an \textit{excursion theoretic} consideration of barrier crossing events. We show that the rate of such events, as computed from excursion theory, factorizes into a local time term and an excursion measure term, which bears empirical similarity to the \textit{transition state theory} rate expression. Since excursion theory makes no assumption about the presence of a transition state in the potential energy landscape, the mathematical structure underlying this factorization ought to be general. We hence expect excursion theory (and local times) to provide some physical and mathematical insights in generic barrier crossing problems.  

\end{abstract}
\maketitle

\section{Introduction}
A systematic study of transition between \textit{metastable} states in physical systems has been of relevance in computing the rates of chemical reactions\cite{pechukas1981transition,truhlar1983current}, protein folding\cite{onuchic1997theory}, crystal nucleation\cite{Oxtoby_1992} etc. An empirical expression for the (temperature-dependent) rate of such events was first provided by Arrhenius\cite{Arrhenius+1889+96+116}, which identified the presence of an `activation energy' barrier that needs to be crossed for a transition to take place. The development of potential energy surfaces(PES) \cite{EyringPolanyi+2013+1221+1246} in the early $20^{\text{th}}$ century allowed the interpretation of chemical reactions as a continuous (albeit rare) progression from `reactants' to `products', mediated by a \textit{transition state} which is a saddle point in the PES corresponding to the aforementioned activation energy barrier. \textit{Transition state theory} (TST)\cite{eyring1935activated}, postulated shortly after this development, provided a derivation for Arrhenius' expression, under the assumption of quasiequilibrium between the reactants and transition state. While the language of transition state theory (and much of this article) is that of chemistry, its underlying structure and usefulness makes it relevant for understanding \textit{rare events} in numerous physical contexts. 

Although transition state theory serves as a strong theoretical tool in interpreting the rates of gas-phase molecular reactions, it is, by virtue of its founding assumptions, incapable of describing reaction rates where the particle dynamics is diffusive, which is the case in many condensed phase systems. To address this problem, Kramers reconceptualized chemical reactions as Brownian motion aided barrier escape from metastable states, driven by thermal fluctuations\cite{kramers1940brownian}. He was able to derive a formally exact expression for reaction rates in the limits of high and low friction\cite{hanggi1990reaction,kramers1940brownian}, which bore a clear structural resemblance to the TST rate expression, indicating that the basic empirical assumptions of activated barrier crossings are still valid. 

While the mathematical assumptions underlying Kramers theory allow for the expansion of the scope of TST and help interpret reaction rates in condensed phase systems, it hinges heavily on the presence of a  transition state in the PES. However, when a transition state cannot be unambiguously identified (for instance, when the PES terrain is `rugged', as in the protein folding problem\cite{onuchic1997theory}), the assumption and limits that Kramers considers do not hold anymore. 

Nonetheless, modeling chemical reaction as a diffusion over a PES, that is intrinsic to Kramers problems (and later extended by Grote and Hynes to non-Markovian processes in \citenum{northrup1980stable, grote1980stable}), brings the tools in the theory of stochastic differential equations (SDE) to the discussion of reaction rates. Computing reaction rates in such a mathematical setup can be achieved by adopting an alternative characterization of chemical reactions involving the \textit{transition path}, i.e. the path along which a reactive transition happens by barrier crossing. Such a viewpoint reduces the emphasis on transition states in computing reaction rates and has been described and developed as \textit{transition path sampling} (TPS)\cite{pratt1986statistical,bolhuis2002transition,bolhuis2003transition}.   

Transition path sampling primarily encompasses algorithmic procedures that attempt to sample all possible routes from reactants to products and weight them based on their relative probability of realization. Extracting useful and relevant information about the mechanism of a reaction, presence of transition states or other dynamical bottlenecks (if any) require a concrete mathematical framework, the importance of which was underlined in the development of \textit{transition path theory}(TPT) by Vanden-Eijnden et al\cite{vanden2006towards,vanden2010transition}. 

The cornerstone of TPT is the definition of \textit{reactive trajectories} from the transition path between the reactants and products, and characterization of its statistical properties. These reactive trajectories can be understood as crossings of the diffusion process (as in Kramers theory) from the reactant region (say $A$) to the product region (say $B$) through a transition region. The \textit{transition path process} is then the (discontinuous) semi-martingale associated with the crossings of the transition region by the diffusion.

The semimartingale associated with such crossings of an interval $(a,b)$ by a continuous semimartingale was systematically studied in \citenum{rajeev1990semi}. In this paper, using the theory introduced in \citenum{rajeev1990semi}, we characterize the transition path process as the unique solution of a singular SDE which is adapted to the original diffusion process. We appropriately identify an open interval $(a,b)$ as the transition region according to the specification of the PES. The rate of the reaction $\kappa$ is obtained from the number of \textit{reactive trajectories} ($N_t$) until time $t$, which, in our context will be the number of \textit{upcrossings} of $(a,b)$ by the diffusion until time $t$.

Each such upcrossing corresponds to a crossing of the activation energy barrier and is hence by preceded by long-time fluctuations around the boundary of the reactant region. A measure of the amount of such fluctuations around a point $a$ until time $t$ is provided in semimartingale theory by the \textit{local time} $L(t,a)$ at that point which is increasing in $t$ and is defined via the Tanaka formula. Every such fluctuation that happens during $[0,t]$ corresponds to an excursion of the diffusion from the point $a$ with a certain height and duration. 

Of interest in this paper are the excursions of the diffusion process from the reactant to the product region. These excursions correspond precisely to the reactive trajectories and hence can be used to compute the reaction rate $\kappa$, which is discussed in later sections of this paper. This procedure factorizes the reaction rate expression into a term involving local time fluctuations and another involving the excursion measure. It also gives us some insight into the dependence of the excursion measure on the parameters of the diffusion.

We also use classical results from renewal theory to compute the reaction rate; we first show that the diffusion process describing chemical reactions is \textit{regenerative} and \textit{positive recurrent} and then later compute $\kappa$ in terms of the expected cycle time associated with a regenerative process (compare with Kramers \textit{mean first passage time} \cite{hanggi1990reaction}). By comparing the $\kappa$ thus obtained using the two methods above, we are able to show that the factorization of reaction rate by excursion theory corresponds precisely to the Arrhenius/TST rate expression. Thus, transitions between two metastable states can always be interpreted mathematically in terms of the local time fluctuations and excursion measure. 

For the sake of clarity and to illustrate the basic idea in reaction-rate theory, we describe the excursions of the diffusion on a simple one-dimensional PES. Much of our results extend to a more general class of one dimensional diffusions that are positive recurrent.  

The paper is organized as follows: In section \ref{Prelims}, we start with the description of the reactive diffusion process and define the relevant notions required to understand the local time scale intrinsic in chemical reactions. In section \ref{SSDEsection}, we formulate a singular SDE that describes the transition path process and characterize its solutions. In section \ref{excurthsect}, we provide a primer on excursion theory and indicate its relevance in reaction rate theory. In section \ref{reactionratesect}, we compute the rate expression using excursion theory. In \ref{renewalthsect}, we compare the known rate expression computed using the renewal theorem and compare it with the one obtained from excursion theory. We conclude with a discussion on the importance and relevance of local time scales in reaction rate theory.

\section{Preliminaries}\label{Prelims}


As in typical in discussions of transition path formalism \cite{vanden2010transition}, we characterize a chemical reaction by the trajectory of a particle diffusing on a potential energy surface $U$ using a real valued stochastic process $X: [0,\infty) \times \Omega \rightarrow \mathbb{R}$ defined on a probability space $(\Omega,\mathcal{F},\text{P})$ that follows the Ito stochastic diffusion equation (SDE):
\begin{equation}\label{SDE}
    dX_t = \sigma(X_t) \; dB_t + b(X_t)dt,  \;\;\;\;\; X_0 = x_0
\end{equation}
where the process $B_t$ is a standard Brownian motion process on a measurable space $\Omega$, $\sigma$ is the \textit{diffusion coefficient}, $b(\cdot)$ is the \textit{drift} term, and $\mathcal{F}$ is the \textit{filtration} generated by the process $B_t$ satisfying the usual conditions as in \citenum{karatzas2012brownian}. Here and below, following the typical convention, we shall drop the explicit dependence on the random trajectory $\omega \in \Omega$ in our notation. 

We assume that the SDE in \eqref{SDE} describes overdamped Langevin dynamics \cite{hanggi1990reaction} and accordingly, by the fluctuation-dissipation theorem and Einstein's relations \cite{zwanzig2001nonequilibrium}, $\sigma$ is constant with $\sigma = \sqrt{2 D}$, where $D$ is the diffusion coefficient. The drift term $b(x)$ is obtained from the PES as:  
\begin{align}
    b(x) = -\frac{D}{k_B T} \frac{dU}{dx}
\end{align}
where $T$ is the temperature and $k_B$ is the Boltzmann constant. For our discussion, the PES shall be the truncated double well potential:
\begin{align}\label{PES}
    U(x) =    
\begin{cases}
 & Q(x) \hspace{3.1cm}  |x |\leq K-\epsilon \\
 &g(|x|) \hspace{3cm} |x| \in (K-\epsilon,K+\epsilon)\\
 &k(|x|-K) + Q(K)  \hspace{0.92cm} |x| \geq K+\epsilon 
\end{cases}
\end{align}

In \eqref{PES}, $Q(x)$ is the quartic double well potential $Q(x) = h^2 x^4 - \frac{1}{2}\lambda^2 x^2$,  $k \;\text{and}\; K $ are positive constants with $K \geq \left |\frac{\lambda}{h} \right |$, and $g(x)$ is a smooth function which makes $U(x)$ differentiable at $|x| = K\pm \epsilon$, so that $U \in C^1(\mathbb{R})$ with a bounded derivative. The proof of existence of such a function is given in the supporting information. A schematic plot of the PES is provided in figure \ref{fig:my_label}.

\begin{figure}
    \centering
    \includegraphics[scale=0.75]{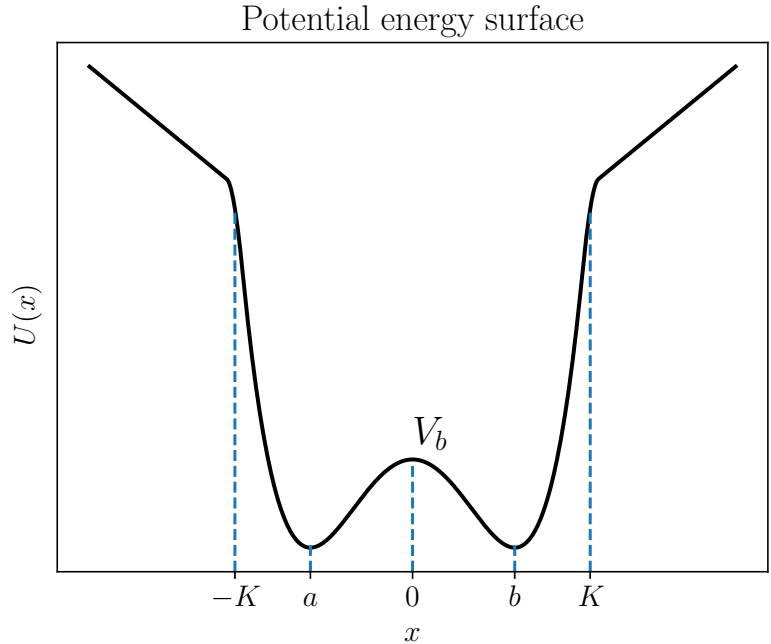}
    \caption{Illustration of the truncated double well potential $U(x)$. The points $a,b$ and the barrier height $V_b$ are as defined in the text. The quartic and the linear of the PES are connected by the function $g$ which is defined in an $\epsilon$ neighborhood around $K$, that makes $U \in C^1(\mathbb{R})$. }
    \label{fig:my_label}
\end{figure}

The double well potential is a straightforward choice for describing transitions between metastable states which are ubiquitous in chemistry and biophysics \cite{onuchic1997theory}. For a more realistic representation of such transitions, the double well can be made asymmetric by the addition of a cubic term to $Q(x)$ in \eqref{PES} and the results of this paper will still hold. The reason for truncating this double well potential at $K$ and appending it with a linear term is purely mathematical - Such a modification makes the drift term $b(x)$ Lipschitz continuous in the coordinate variable and also satisfy:
\begin{align}\label{bestimate}
    \int_0^x b(y) dy \leq m |x| \hspace{1cm} \text{for some} \; m < 0, |x| \geq x_0
\end{align}

The first condition allows \eqref{SDE} to have a unique strong solution $(X_t)$ relative to the Brownian motion $(B_t)$ that is a continuous semimartingale\cite{karatzas2012brownian} adapted to the filtration $\mathcal{F}$ (Theorem 2.5 in \citenum{karatzas2012brownian}) and the second condition allows $(X_t)$ to have a finite \textit{speed measure}, which shall be discussed later (Note that the point of truncation is taken far from the minima of $Q(x)$ so as to ensure that the linear term in $U(x)$ has negligible impact on the dynamics of $(X_t)$). This stochastic process $(X_t)$ shall be the concern of this paper. We begin first with a few definitions required for this discussion:

Let $T_x := \inf \{s : X_s = x\}$. For $y \in \mathbb R$, we denote
by $P_y$ the law of $X_t$ on $C([0,\infty))$ such that $P_y\{X_0 = y\} =1$. Note that $X_t$ is the \textit{coordinate mapping process} \cite{karatzas2012brownian} on $C([0,\infty))$, that is $X_t(\omega) = \omega(t)$ for $\omega \in C([0,\infty))$. $E_y$
denotes the expectation under $P_y$. The process $X_t$ is \textit{positive recurrent} if $\forall x,y \in \mathbb{R}$, $E_y T_x < \infty$.

Let $\tau_0 := 0 := \sigma_0$ and
define the sequence $\{\sigma_n; n \geq 1\}, \{\tau_n; n \geq 1\}$
as follows:
\begin{align*}
\sigma_n := \inf \{s > \tau_{n-1}: X_s = b\} ~{\rm and}~ \tau_n :=
 \inf \{s > \sigma_{n}: X_s = a\}
\end{align*}
Here we choose $-a = b = \left | \frac{\lambda}{2h} \right |$, the points of global minima of the quartic potential $Q(x)$ (Note that we use $b(\cdot)$ for denoting the drift term). In the context of reaction rate theory and metastable states, these points can be interpreted as the equilibrium structure of the `reactant' and `product' metastable states.

The process $(X_t)$ is said to be \textit{regenerative} at a point $x$, if $P(\theta_{\tau}X_{\cdot} \in A | \mathcal{F}_{\tau}) = P_x(A) \; a.s.$ on $\{ \tau<\infty, \; X_{\tau} = x \}$ for every \textit{stopping time} $\tau$ \cite{kallenberg1997foundations}, where $\theta_{\tau}X_{\cdot} := X_{\cdot + \tau}$. 

\begin{prop}\label{posrecur}
The unique strong solution $(X_t)$ to \eqref{SDE} is positive recurrent and regenerative at the points $a$ and $b$.
\end{prop}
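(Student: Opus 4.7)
The plan is to split the claim into two parts and attack each with standard one-dimensional diffusion theory. For \emph{positive recurrence}, I would use the classical scale function / speed measure criterion; for the \emph{regenerative} property at $a$ and $b$, I would invoke the strong Markov property of $(X_t)$, which is available because \eqref{SDE} has a Lipschitz drift (and constant diffusion) and hence a unique strong solution.

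For positive recurrence, define the scale density $s'(y) = \exp\!\bigl(-\int_0^y 2b(z)/\sigma^2\,dz\bigr)$ and the speed-measure density $\rho(y) = 2/(\sigma^2 s'(y))$. With $\sigma^2 = 2D$ and $b = -(D/k_BT)\,U'$, the exponent collapses to $(U(y)-U(0))/k_BT$, giving $s'(y) = e^{(U(y)-U(0))/k_BT}$ and $\rho(y) = D^{-1} e^{-(U(y)-U(0))/k_BT}$. The key input is estimate \eqref{bestimate}: since $\int_0^y b(z)\,dz = -(D/k_BT)(U(y)-U(0))$, it is equivalent to a linear lower bound $U(y) \geq U(0) + c|y|$ for some $c>0$ and $|y| \geq x_0$. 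This yields two conclusions simultaneously: $s'$ grows at least exponentially, so $s(\pm\infty) = \pm\infty$ (hence recurrence), and $\rho$ decays at least exponentially, so $\int_{\mathbb{R}} \rho(y)\,dy < \infty$ (finite speed measure). The classical criterion for one-dimensional diffusions (as in Karatzas--Shreve Ch.~5 or It\^o--McKean) then delivers $E_y T_x < \infty$ for all $x,y \in \mathbb{R}$.

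Regeneration at $a$ and $b$ is essentially automatic once strong Markov is in hand. The unique strong solution of \eqref{SDE} is a time-homogeneous strong Markov process, so for any stopping time $\tau$ with $X_\tau = x$ on $\{\tau<\infty\}$ and any measurable $A \subset C([0,\infty))$, the strong Markov property yields $P(\theta_\tau X_\cdot \in A \mid \mathcal{F}_\tau) = P_{X_\tau}(A) = P_x(A)$ a.s.\ on $\{\tau<\infty,\, X_\tau = x\}$. This is exactly the regenerative condition, and specialising to $x=a$ and $x=b$ finishes the claim.

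The main obstacle is not the structure of the argument but the bookkeeping: one must verify that \eqref{bestimate} really does force a linear-growth lower bound on $U$ (which relies on the PES being truncated to be linear at infinity, as engineered in \eqref{PES}), and then cite or reprove the precise scale/speed theorem that turns exponential growth of $s'$ and exponential decay of $\rho$ into positive recurrence. Once those two ingredients are in place, regeneration is a one-line consequence of strong Markov and requires no further work.
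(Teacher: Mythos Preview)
Your proposal is correct and follows essentially the same route as the paper: the regenerative property is obtained as a one-line consequence of the strong Markov property, and positive recurrence is deduced from the scale/speed framework with \eqref{bestimate} supplying the growth of the scale function and the finiteness of the speed measure. The only difference is granularity: the paper explicitly carries out the limit computation $E_y T_x = \lim_{n\to\infty} E_y(T_x \wedge T_n)$ along the lines of Karatzas--Shreve, Chapter~5, Exercise~5.40 (including an application of l'H\^opital's rule), whereas you simply invoke the classical criterion as a black box.
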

\begin{proof}
The fact that $(X_t)$ is
regenerative at $a$ follows from a standard application of the strong Markov property
of the diffusion $(X_t)$ under $P_a$, which can be stated as follows \cite{karatzas2012brownian, kallenberg1997foundations}: 
\begin{align}\label{strongMarkov}
    E_a \{F(X_{\tau +
\cdot})|{\cal F}_{\tau}\} ~=~ E_{X_\tau} F(X_{\cdot}), {\rm ~a.s.~} P_a.
\end{align} 
where $F : C[0,\infty) \rightarrow \mathbb R$ is a bounded Borel-measurable function, $\tau < \infty \; a.s.$ is any stopping time and $X_{\tau + \cdot}$ denotes the path $s \rightarrow
X_{\tau +s}$. The regenerative property follows by taking $F = I_A$, where $A \subseteq C([0,\infty))$ is a Borel set and $\tau$ such that $X_{\tau} = a$.

We first define the \textit{speed measure} and the \textit{scale function}. Given the diffusion process $(X_t)$ as in \eqref{SDE}, the scale function $p$ is defined w.r.t a point $c \in \mathbb{R}$  as \cite{karatzas2012brownian}:
\begin{align}\label{scalefn}
  p(x) := \int_c^x \exp \left \{ -2 \int_c^y \frac{b(y)}{\sigma^2(y)}dy \right\} dx
\end{align}
The scale function for $(X_t)$ is well-defined since the integrand in the exponent is continuous. The speed measure\cite{karatzas2012brownian} is defined as:
\begin{align}\label{speedmeas}
    m(dx) := \frac{2\;dx}{p'(x)\sigma^2(x)}
\end{align}
The positive recurrence of the diffusion $(X_t)$ is proved as in \citenum{karatzas2012brownian} (Ex. 5.40 of Chapter 5). For completeness, we give a few details. For $x<y$, we have positive recurrence if the RHS limit in
\begin{align*}
    E_y T_x = \lim\limits_{n \rightarrow \infty} E_y(T_x \wedge T_n)
\end{align*}
is finite. From eq. (5.55) and (5.59) in \citenum{karatzas2012brownian}, 
\begin{align}
E_y(T_x \wedge T_n) = - \int_x^y(p(y)- p(z))m(dz) +
\frac{p(y)-p(x)}{p(n)-p(x)}\int_x^n (p(n)-p(z))m(dz) 
\end{align}
In the limit $n\rightarrow\infty$,
\begin{align*}
\lim\limits_{n \rightarrow \infty}\frac{1}{p(n)-p(x)}\int_x^n (p(n)-p(z))m(dz) =& m([x,\infty)) + \lim\limits_{n \rightarrow \infty}\frac{p(x)m(x,n)}{p(n)-p(x)} \\& - \lim\limits_{n
\rightarrow \infty}\frac{1}{p(n)-p(x)}\int_x^n p(z)m(dz) 
\end{align*}
Since $b(x)\rightarrow \pm k$ as $ x\rightarrow \pm\infty$ (from \eqref{PES}, we have, from \eqref{scalefn} that $p(x) \rightarrow \infty $ as $ x\rightarrow\infty$ and $m([x,\infty)) < \infty$. So, the first term in the RHS of the equation above is finite, and the second term vanishes. As for the third term in the
RHS, we note that
\begin{align}\label{ratio} 
\lim\limits_{n \rightarrow \infty}\frac{1}{p(n)-p(x)}\int_x^n
p(z)m(dz)=
\lim\limits_{n \rightarrow \infty}\frac{2}{\sigma^2}\frac{1}{p(n)-p(x)}\int_x^n\frac{p(z)}{ p'(z)} dz
\end{align} 
Further note that, 
\begin{align}
    \lim\limits_{n \rightarrow \infty}\int_x^n \frac{p(z)}{p'(z)}dz \rightarrow \infty
\end{align}
Using l'Hospitale's rule, we
evaluate the limit in  \eqref{ratio},
\begin{align}
    \lim\limits_{n \rightarrow \infty}\frac{2}{\sigma^2}\frac{1}{p(n)-p(x)}\int_x^n\frac{p(z)}{ p'(z)} dz =\lim\limits_{n \rightarrow \infty} \frac{2}{\sigma^2} \frac{p(n)}{p'(n)^2} = 0,
\end{align}
where the last equality follows from the estimate in \eqref{bestimate} and the definition of $p(y)$.  Hence $E_yT_x < \infty$ when $x < y$.
The case $y < x$ can be
similarly treated by considering $E_y T_x =\lim\limits_{n \rightarrow \infty} E_y(T_{-n} \wedge T_x)$.
\end{proof}

\begin{remark}\label{tau1remark}
Note that positive recurrence implies  $E_a\tau_n < \infty$ for every $n \geq 1$. In
particular, $\tau_n < \infty~ a.s.$ Note that the regenerative property tells us that the cycle times $\tau_n-\tau_{n-1}$ are independent and identically
distributed \cite{kallenberg1997foundations} under $P_a$ with mean $E_a (\tau_n - \tau_{n-1}) = E_a \tau_1 = E_a
T_b + E_bT_a$.
\end{remark}

Before we move on to a discussion of transition paths, we summarize the nomenclature in transition path formalism, largely following \citenum{vanden2010transition}:

Denoting the minima of the PES $U$ as $a$ and $b$ as before, we set $A = (-\infty,a]$ as the \textit{reactant} region and $B = [b,\infty)$ as the \textit{product} region. We define the \textit{last entrance time} into and \textit{first exit time} from the set $\mathbb{R}\backslash A\cup B =(a,b)$, $\sigma_t$ and $\tau_t$ respectively \cite{rajeev1990semi}, as:
\begin{align*}
    \sigma_t &:= \sup_{s\leq t} \left \{s: X_s \notin \mathbb{R} \backslash (A\cup B)  \right \}\\
\tau_t &:= \inf_{s\geq t} \left \{s: X_s \notin \mathbb{R} \backslash (A\cup B)  \right \}
\end{align*}
As in \citenum{vanden2010transition},we denote by $R$ the times at which $X_t \in \mathbb{R}\backslash (A\cup B)$, 
\begin{align}
R := \left \{ t: X_t \in \mathbb{R} \backslash (A\cup B), X_{\sigma_t} \in A \; \text{and} \; X_{\tau_t} \in B \right \} := \bigsqcup_{i=1}^\infty (a_i,b_i)
\end{align}
where the last equality holds because $(X_t)$ is continuous and hence $R$ is an open set. Noting that $X_{a_i} =a , X_{b_i}  =b \; \forall i$, the ensemble of \textit{reactive trajectories} (See figure \ref{Crossings_zoomed}) can be defined as:
\begin{align*}
\mathcal{R} := \bigcup_{i} \left \{ X_t : t \in (a_i,b_i)\right \}
\end{align*}
where each $i$ corresponds to a reactive trajectory. Denoting the number of reactive trajectories until time $t$ as $N_t$, the rate $\kappa$ of the reaction is given as \cite{vanden2010transition}:
\begin{align}\label{reactionrate}
    \kappa := \lim_{t\rightarrow\infty} \frac{N_t}{t} 
\end{align}

It is easy to observe that the process $(Z_t)$ defined as $Z_t := X_t - X_{\sigma_t}$ describes the dynamics of $(X_t)$ in $\mathcal{R}$, as $Z_t \equiv 0$ for $t \notin R$ and $Z_t \in (-(b-a),b-a)$ for $t \in R$ and thus can be considered the \textit{transition path} process. The dynamics of the process $(Z_t)$ merits a separate discussion and in this regards, \citenum{lu2015reactive} conceive the idea of using the Doob-h transform \cite{day1992conditional} to define an auxiliary SDE, whose solutions have the same law as that of the reactive trajectories. We are, however, interested in associating an SDE to the transition path process $(Z_t)$ that focuses on crossing events or \textit{excursions} between the sets $A$ and $B$ as in \citenum{rajeev1989crossings,rajeev1990semi,Rajeev1996}  and making explicit the role of the intrinsic \textit{local time}\cite{kallenberg1997foundations} scale associated with such events. This shall be the subject of the next section. One advantage to this approach, despite the fact that it is not easily extendable to higher dimensions, is that the solutions to such an SDE will be adapted the same filtration $(\mathcal{F}_t)$ as that of the process $(X_t)$. Further discussion on the insights that local times and excursion theory provide in the context of reaction rates is reserved for later sections. 

\section{Singular SDE describing transition path}\label{SSDEsection}

Since the definition of the transition process $(Z_t)$ motivated in the previous section hinges on $(X_t)$, it is expected that the SDE describing $(Z_t)$ will be driven by the continuous martingale $(X_t)$ and not by a brownian motion as in \eqref{SDE}. A detailed description of such SDEs driven by continuous semimartingale and conditions for the existence of unique strong/weak solutions to them has been provided in \citenum{Karandikar2018}. However, the SDEs in \citenum{Karandikar2018} assume a continuous drift term. 

From the discussion in \citenum{rajeev1990semi}, it is clear that the transition path process $(Z_t)$ will have contributions from local time terms, so the SDE describing $(Z_t)$ ought to have a `singular drift' or a `local time drift', the theory of which is explained in \citenum{BassChen10.2307/25053406,BLEI20134337,LeGall10.1007/BFb0099122}. In particular, \citenum{BassChen10.2307/25053406} also discusses the cases in which one can expect a strong solution to such SDEs with singular term. Using the insights in \citenum{rajeev1990semi}, in this section we define a singular SDE that describes the transition path process, explain what a solution to it means, and give the proof of its existence and pathwise uniqueness. 

Given the continuous semimartingale $(X_t)$ that is the strong unique solution to \eqref{SDE}, let $S := \left \{ t: X_t \in (a,b) \right \}$ and consider the following equation:
\begin{align}\label{SSDE}
    dZ_t &= \mathcal{V}(Z_t)dX_t + d\mathcal{L}_t(Z_{.})
\end{align}
with
\begin{subequations}
\begin{align}
\mathcal{V}(x) &= I_{\left \{ 0 < |x| < b-a \right \}}, \\
X_t \notin (a,b) &\Leftrightarrow Z_t \equiv 0 \;\; \text{and} \label{Ztout} \\ 
\supp(d\mathcal{L}_t) \subseteq S^c, &\;i.e. \int_{0}^{\infty}I_S(s)|d\mathcal{L}_s| = 0\label{suppcondn}
\end{align}
\end{subequations}

Since \eqref{SSDE} has a local time drift and the `diffusion coefficient' is the discontinuous function $\mathcal{V}$, this is a singular SDE, driven by a continuous semimartingale. By a \textit{solution} to \eqref{SSDE}, we mean a pair $(Z_t,\mathcal{L}_t)$ such that the following holds:

\begin{enumerate}
    \item $Z_t$ and $\mathcal{L}_t$ are adapted to the filtration $\sigma(X)$, and is right continuous with left limits(rcll), or a \textit{c\`adl\`ag}. 
    \item $\mathcal{L}_t$ has bounded variation with $\mathcal{L}_0 = Z_0 = 0$. Moreover, $\mathcal{L}_t$ is adapted to $\sigma(Z)$, namely the filtration generated by the process $Z$ and satisfies \eqref{suppcondn}. 
    \item $Z_t$ and $\mathcal{L}_t$ satisfies the following piecewise definition: 
    
    Let 
    \begin{align*}
        \tau  = \inf \left \{ u>0: X_u \notin (a,b)\right \}
    \end{align*} 
    
    \begin{enumerate}
        \item For $0 \leq t < \tau$, $Z_t = X_t - X_0$ and $\mathcal{L}_t = 0$. Since $\mathcal{V}(Z_t) =1$ for almost every $(t,\omega) \in [0,\tau) \times \Omega$ w.r.t the measure $dt \; dP$, 
        \begin{align*}
            Z_t = \int_0^t \mathcal{V}(Z_s) dX_s.
        \end{align*}
        \item For $t \geq \tau$, $Z_t = \int_{\tau}^t \mathcal{V}(Z_s) dX_s + \mathcal{L}_t$
    \end{enumerate} 
    
\end{enumerate}

\begin{remark}
Note that $Z_t$ and $\mathcal{L}_t$ are not required to be continuous processes. Since $X_t$ is a continuous process, it follows from \eqref{SSDE} and \eqref{Ztout} that $\Delta Z_t = \Delta \mathcal{L}_t$.
\end{remark}

\newtheorem{theorem}{Theorem}
\begin{theorem} [Existence-Uniqueness theorem]\label{exunthm}
There exists a unique solution to the singular stochastic differential equation \eqref{SSDE}.
\end{theorem}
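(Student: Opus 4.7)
The plan is to construct a solution $(Z_t, \mathcal{L}_t)$ explicitly from the excursion structure of $X$ with respect to $(a,b)$, using the last-entrance time $\sigma_t = \sup\{s \le t : X_s \in A \cup B\}$ (with the convention $\sup \emptyset := 0$) introduced earlier. The candidate is
\begin{align*}
Z_t := X_t - X_{\sigma_t}, \qquad \mathcal{L}_t := Z_t - \int_0^t \mathcal{V}(Z_s)\,dX_s.
\end{align*}
By construction $Z$ is $\mathcal{F}_t$-adapted and c\`adl\`ag (it jumps only at the instants when $X$ exits $(a,b)$ and $X_{\sigma_t}$ flips from one boundary value to the other), $Z_t = 0$ precisely when $X_t \in A \cup B$, and otherwise $Z_t \in (-(b-a),0) \cup (0,b-a)$, so $\mathcal{V}(Z_t) = 1$ along each excursion of $X$ inside $(a,b)$. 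Verifying the SDE then reduces to three cases: on an open excursion of $X$ inside $(a,b)$, $X_{\sigma_t}$ is constant, whence $dZ_t = dX_t = \mathcal{V}(Z_t)\,dX_t$ and $d\mathcal{L}_t = 0$; on an interval with $X_t \in A \cup B$, $Z \equiv 0$, $\mathcal{V}(Z)=0$ and both sides vanish; and at an exit epoch $\tau$ the continuity of $X$ forces the stochastic-integral term to contribute no jump, so $\Delta Z_\tau = -Z_{\tau-} \in \{0, \pm(b-a)\}$ is absorbed entirely by $\Delta\mathcal{L}_\tau$. By positive recurrence (Proposition \ref{posrecur}) the number of completed up- and downcrossings of $(a,b)$ on any compact interval is a.s.\ finite, so $\mathcal{L}$ is of finite variation on compacts. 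The support condition \eqref{suppcondn} follows because $\mathcal{L}$ is purely discrete and its jumps occur at exit times $\tau$ with $X_\tau\in\{a,b\}\subset S^c$.

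\textbf{Uniqueness.} Given a second solution $(Z', \mathcal{L}')$, I argue by induction over the successive entry and exit epochs of $X$ with respect to $(a,b)$. On $[0, \tau)$, clause (3a) of the solution definition pins $Z'_t = X_t - X_0$ and $\mathcal{L}'_t = 0$. At $\tau$, condition \eqref{Ztout} forces $Z'_\tau = 0$, and since $X$ is continuous the stochastic-integral term contributes no jump, so $\Delta \mathcal{L}'_\tau = -Z'_{\tau-}$ is uniquely determined. On the following interval where $X_t \in A \cup B$, \eqref{Ztout} again forces $Z' \equiv 0$, whence $\mathcal{V}(Z')=0$ and the SDE yields $d\mathcal{L}' = 0$. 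On the subsequent excursion into $(a,b)$, \eqref{Ztout} at the entry epoch forces the starting value of $Z'$ to be $0$, and $\mathcal{V}(Z') = 1$ on the interior forces $dZ' = dX$, so $Z'_t = X_t - X_{\text{entry}}$. Iterating across the successive intervals yields $(Z', \mathcal{L}') = (Z, \mathcal{L})$.

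\textbf{Main obstacle.} The principal technical difficulty is that a one-dimensional nondegenerate diffusion may touch $\{a,b\}$ on an uncountable (Cantor-like) set of zero Lebesgue measure, so naively enumerating individual ``exits'' and ``re-entries'' as discrete stopping times can produce accumulation points and a poorly defined inductive scheme. The remedy is to invoke the interval-crossing calculus developed in \cite{rajeev1990semi}, which guarantees that $Z$ defined above is a genuine c\`adl\`ag semimartingale whose jumps (each of magnitude $b-a$) form an a.s.\ locally finite set --- precisely the epochs of completed up- and downcrossings between $a$ and $b$. With that result in hand the piecewise construction is well-posed and the uniqueness argument traverses a locally finite family of intervals, as required.
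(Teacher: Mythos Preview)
Your candidate $Z_t = X_t - X_{\sigma_t}$ is the right object, and your uniqueness idea (on $S^c$, \eqref{Ztout} forces $Z'=0$; on each component of $S$, $dZ'=dX$ with initial value $0$ pins $Z'$) is essentially the paper's pathwise argument. The genuine gap is in your existence verification: the claim that $\mathcal{L}$ is \emph{purely discrete} is false. Applying Tanaka's formula as in \cite{rajeev1990semi} (which is exactly what the paper does) gives, for $t\ge\tau$,
\[
Z_t \;=\; \int_\tau^t I_{(a,b)}(X_s)\,dX_s \;+\; \tfrac12\bigl(L(t,a)-L(t,b)\bigr) \;-\; (b-a)\bigl(U(t)-D(t)\bigr),
\]
so that $\mathcal{L}_t = \tfrac12(L(t,a)-L(t,b)) - (b-a)(U(t)-D(t))$ has a \emph{continuous} local-time part in addition to the jump part you identified. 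Concretely: immediately after $X$ first hits $a$, the process makes infinitely many tiny excursions above $a$ separated by a Cantor-like set $\{t:X_t=a\}$; on that set $Z_t=0$ but $\int_0^t I_{(a,b)}(X_s)\,dX_s = -\tfrac12 L(t,a)$ (by Tanaka), so $\mathcal{L}_t=\tfrac12 L(t,a)$ is strictly increasing there. Your three-case decomposition (excursion interval / exterior interval / exit epoch) does not partition time, precisely because of the Cantor set you flag in your ``Main obstacle'' paragraph; the local-time contribution lives on that set and is invisible to any piecewise-over-intervals accounting.

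This also undermines two downstream claims. The support condition \eqref{suppcondn} does hold, but not because $\mathcal{L}$ is pure-jump---rather because both $dL(\cdot,a)$ and $dL(\cdot,b)$ are carried by $\{X=a\}\cup\{X=b\}\subset S^c$. And in your uniqueness step, the assertion ``on the following interval where $X_t\in A\cup B$ \ldots\ the SDE yields $d\mathcal{L}'=0$'' is wrong for the same reason; fortunately it is unnecessary, since once $Z'=Z$ is established the identity $\mathcal{L}'=Z'-\int \mathcal{V}(Z')\,dX$ forces $\mathcal{L}'=\mathcal{L}$ automatically. The fix is to replace the informal case analysis by the Tanaka/occupation-times computation above, which is precisely the content of \cite{rajeev1990semi} that you invoke but do not actually use.
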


\begin{proof}
By definition of a solution (condition 3 above) to the singular SDE, the process $(Z_t) = X_t-X_{\sigma_t}$ satisfies \eqref{SSDE} for $t<\tau$. We hence show that the transition path process $Z_t$ is a solution to \eqref{SSDE} for $t\geq\tau$. This follows right away from the Tanaka formula for a semi-martingale, as outlined in \citenum{rajeev1990semi}. We have the following expression for $(Z_t)$ for $t \geq \tau$:

\begin{align}\label{eq4}
\begin{split}
    Z_t = X_t - X_{\sigma_t} &= \int_{\tau}^{t}I_{(a,b]}(X_s)dX_s + \frac{1}{2}\left \{ L(t,a) - L(t,b) \right \} - (b-a)(U(t)-D(t)) \\
    &=\int_{\tau}^{t}I_{(a,b)}(X_s)dX_s + \int_{\tau}^{t}I_{(X_s=b)}dX_s \\&+ \frac{1}{2}\left \{ L(t,a) - L(t,b) \right \} - (b-a)(U(t)-D(t))
\end{split}
\end{align}
where $L(t,a)$ and $L(t,b)$ are local times at $a$ and $b$ respectively of $(X_t)$, $U(t)$ and $D(t)$ are the number of upcrossings and downcrossings of $(a,b)$ by $(X_t)$ during $(0,t)$. Note that since the $dt(\{s:X_s = b\} = 0) \; a.s.$, the second integral in the RHS vanishes. It is easy to observe that $I_{(a,b)}(X_t) = I_{\left \{ 0<|Z_t|<b-a \right \}}$. 

Also, the upcrossings $U(t)$ and downcrossings $D(t)$ of $(a,b)$ by $(X_t)$ are the same as the upcrossings and downcrossings of $(0,b-a)$ and $(-(b-a),0)$ respectively by $(Z_t)$. In fact,  $Z_t = (X_t-a)^+$ during an upcrossing, $Z_t = (X_t-b)^-$ during a downcrossing and jumps to $0$ from $\pm (b-a)$ at the time of crossing (See figure \ref{Crossings} and \ref{Crossings_zoomed}). Let $L_Z(t,x)$ denote the local time of the process $(Z_t)$ at $x$. It is well-known that $L_Z(t,x)$ is right continuous and has left limits at every $x$ \cite{revuz2013continuous}. Further, from \citenum{rajeev1990semi} we can show that the following relation holds: 
\begin{align}\label{LZtLXt}
    L(t,a) = L_Z(t,0)  \;\;\;\;\;\;\; L(t,b) = L_Z(t,0-)  
\end{align}

\begin{figure}
    \hspace{-2cm}
    \includegraphics[scale=0.83,left]{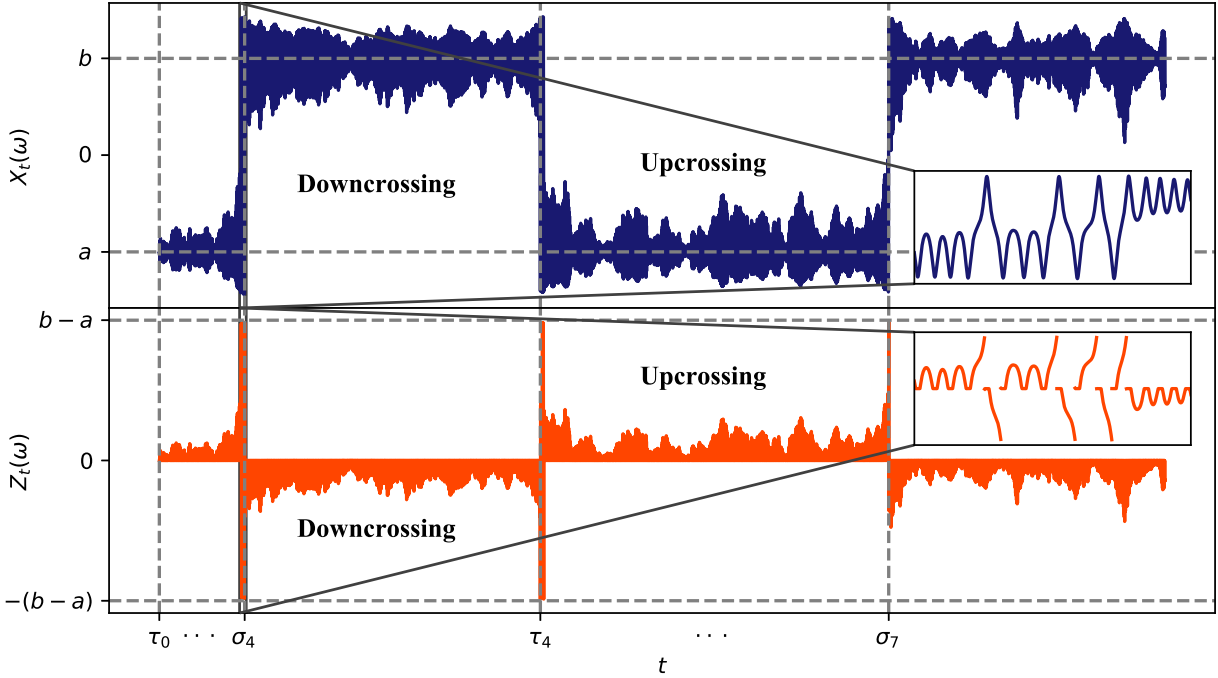}
    \caption{Illustration of the transition path process $(Z_t)$: One specific realization of the diffusion $X_t(\omega)$ and the corresponding transition path process $Z_t(\omega)$ are plotted above. Here, each $(\tau_i,\sigma_{i+1})$ is an upcrossing interval and $(\sigma_i,\tau_{i+1})$ is a downcrossing interval. As can be seen from the inset image, $Z_t(\omega)$ is discontinuous at the end points of these intervals (See figure \ref{Crossings_zoomed} for further clarity). }
    \label{Crossings}
\end{figure}

The first equation in \eqref{LZtLXt} follows from eqn. 7 in \citenum{rajeev1990semi}. The second equation in \eqref{LZtLXt} follows from letting $x$ increase to $0$ in eqn. 6  (We use the notations as in \citenum{rajeev1990semi} and observe that $L(t,b)$ has no support in the upcrossing intervals (i.e. in $\{ s: \Theta^u(s) =1 \}$). This shows that
\begin{align}
   \mathcal{L}_t(Z_.) := \frac{1}{2}\left \{ L(t,a) - L(t,b) \right \} - (b-a)(U(t)-D(t)) 
\end{align}
is a well-defined functional of the process $(Z_t)$. It is straightforward to show that $\mathcal{L}_t$ has bounded variation and the support property in \eqref{suppcondn} is satisfied. As $X_t = (X_t - X_{\sigma_t}) + X_{\sigma_t} =: Z_t + Y_t $, we have that $(Z_t,\mathcal{L}_t)$ is a solution to \eqref{SSDE}.
 
We prove the uniqueness of the solution in two parts. Suppose there is another solution $(Z_t^{'},\mathcal{L}_t^{'})$ to equation \eqref{SSDE}. It suffices to show that $Z_t^{'} = Z_t \;\; \forall t \; a.s$. It then follows that $\mathcal{L}^{'}_t (Z') = \mathcal{L}_t (Z)$. For $t \notin S$, $X_t \notin (a,b)$ and hence $Y_t^{'} = X_t = Y_t$ as $Z_t^{'}$ satisfies \eqref{Ztout}. Now, for the case of $t \in S$, we proceed via a pathwise argument.
\begin{align}\label{uni2}
    Z^{'}_t = \int_0^t I_{\{0<|Z^{'}_t| < b-a\}} dX_s + \mathcal{L}^{'}_t(Z^{'}_t)
\end{align}

\begin{figure}
    \hspace{-1.75cm}
    \includegraphics[scale=0.83,left]{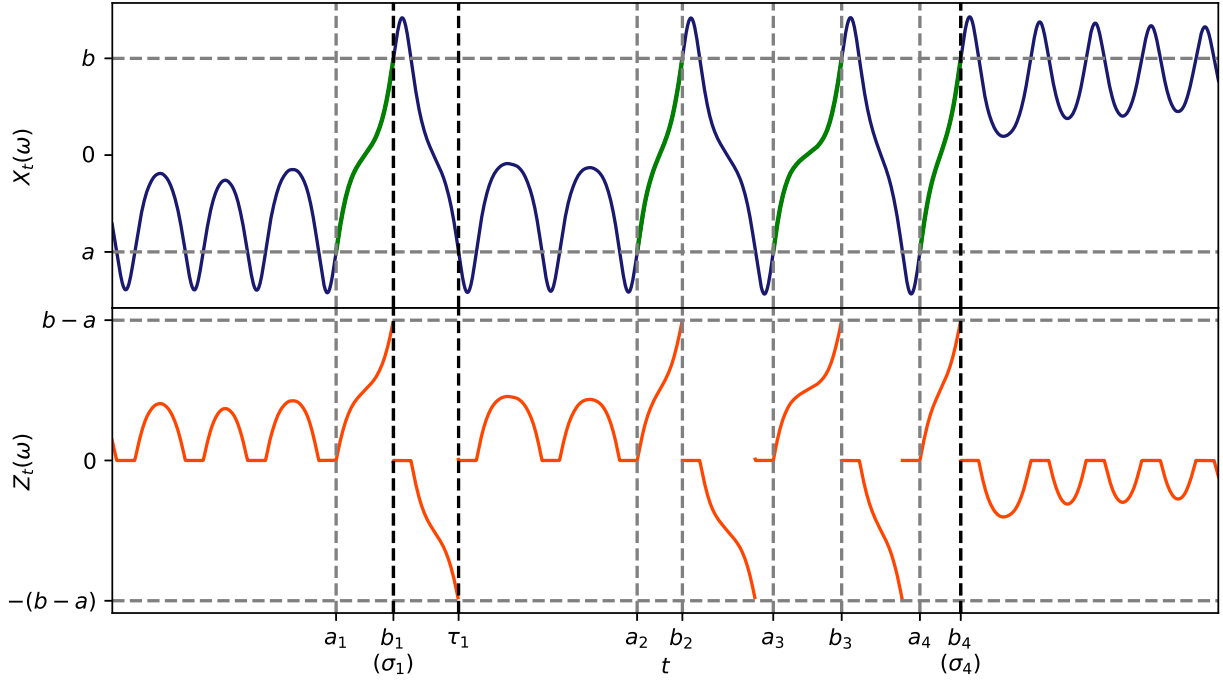}
    \caption{The inset image in figure \ref{Crossings} is zoomed further. The portions of $X_t(\omega)$ highlighted in green are the reactive trajectories. Each reactive trajectory occurs in the time interval $(a_i,b_i)$. The points $\sigma_1, \tau_1$ and $\sigma_4$ are also marked in black for reference. The discontinuities in $Z_t(\omega)$ at each $\sigma_i$ and $\tau_i$ are clearly evident.}
    \label{Crossings_zoomed}
\end{figure}

Denoting the set $S^{'} := \{s : 0 < |Z^{'}_s| < b-a\}$,  we write $S :=\bigcup_i (s_i,t_i) = S^{'} $. For $h>0$, define 
\begin{align}
    D_h := \inf \{ s>h : |Z^{'}_s| \notin (0,b-a) \}
\end{align}
From the positive recurrence of $(X_t)$, we can show that $D_h < \infty$ a.s.
Hence we have for a fixed $h>0$, from \eqref{uni2} and the fact that $\supp(d\mathcal{L}^{'}_t) \subseteq S^c $, 
\begin{align}\label{hDh}
    Z^{'}_{h} - Z^{'}_{D_h} = X_{h} - X_{D_h} \;\; \text{a.s.}
\end{align}

For $\omega$ s.t. $h<D_h(\omega)$ and $h \in (s_i(\omega),t_i(\omega))$, take $h_n \in (s_i(\omega),t_i(\omega))$, then we have from \eqref{hDh} applied to $h$ and $h_n$ separately,
\begin{align}
    Z^{'}_h - Z^{'}_{h_n} = X_h - X_{h_n}
\end{align}
since $D_h = D_{h_n}$. Taking $h_n < h$, $h_n\downarrow \sigma_h$ and noting that $Z^{'}_{\sigma_h} = 0$, we get 
\begin{align}\label{Zh'}
    Z^{'}_h = X_h - X_{\sigma_h} \;\; \forall h \in (s_i(\omega),t_i(\omega))
\end{align}

\eqref{Zh'} and the fact that $\Omega = \bigcup_{h \in Q^+} \{\omega : h < D_h(\omega) \} $ completes the proof. 

\end{proof}

Thus we have characterized the transition path process as the unique solution of an associated singular stochastic differential equation. The fact that the transition path process $(Z_t)$ is described by a  singular SDE involving local time drift suggests that there are local timescales associated with $(Z_t)$ that one needs to take cognizance of. In the next section, we will explore this connection in detail and provide an excursion theoretic conceptualization of reactive trajectories. 

\section{Excursions of transition paths and reaction rates}\label{excurthsect}

The transition path process $(Z_t)$ characterizes the shuttling action of the process $(X_t)$ into and outside the set $\mathbb{R}\backslash(A\cup B)$, which is $(a,b)$ in our discussion. \textit{Ito excursion theory} \cite{ito1972poisson} describes the Poisson point process of excursions from a point associated with the continuous recurrent diffusion such as $(X_t)$, using the local time at that point. In the context of this paper, we need a theory of excursions of the semimartingale $(X_t)$ into the set $(a,b)$ starting from the boundary $\left \{ a,b \right \}$.  This need a suitable modification of Ito's theory to include excursions from more than one point. For a generalization of Ito's excursion theory, we refer the reader to Maissonneuve's discussion on \textit{Exit systems} \cite{maisonneuve1975exit}.

For the specific case of the set $B$ being $(a,b)$ and $(X_t)$ being a Brownian motion, \citenum{bhaskaran2022asymptotic} had described the associated \text{excursions} of $(X_t)$ in terms of Ito point excursion processes and local times at the boundary points of the interval $(a,b)$. It was shown that the Maissonneuve excursion of $(X_t)$ into the set $(a,b)$ is `decomposable' into Ito excursions of $(X_t)$ from the points $a$ and $b$. This section shall focus on extending the arguments in \cite{bhaskaran2022asymptotic} to describe the excursions of the continuous diffusion $(X_t)$ about $a$ and $b$, which can be equivalently formulated in terms of the excursions of the transition path $(Z_t)$
about the point $0$. We use the excursion theory presented in \citenum{kallenberg1997foundations} for regenerative processes. 

We begin this section with the definitions of \textit{excursion space} and \textit{excursion processes} and then proceed to give an expression for the rate constant in terms of excursions of the processes $(X_t)$ or equivalently that of $(Z_t)$.

The space of excursions of $(X_t)$ into the set $(a,b)$ is defined as follows:
\begin{equation*}
\begin{split}
    \mathcal{E}(a,b) := \{ u | \; &u: [0,R(u)) \rightarrow [a,b], \; u \text{ is continuous on} \; [0,R(u)), \\&a<u(t)<b \; \forall t \in (0,R(u) ), \; u(0) \in \left \{ a,b \right \} \; \text{and} \; u(R(u)-) \in \left \{ a,b \} \right \} .
\end{split}
\end{equation*}
$\mathcal{E}(a,b)$ can be identified as a measurable subset of the space $C([0,\infty))$ with the induced topology by extending each function $u \in \mathcal{E}(a,b)$ as a constant beyond $R(u)$. Note that $\mathcal{E}(a,b) = \mathcal{E}_a \bigcup \mathcal{E}_b$, where, for $x\in \{a,b\}$,
\begin{align*}
    \mathcal{E}_x := \{ u \in \mathcal{E}(a,b)|  \; u(0)=x  \}.
\end{align*}
The same space can be expressed in terms of excursions of the transition path process $Z_t$:
\begin{equation*}
\begin{split}
    \mathcal{E}_Z (\mathcal{S}) := \{u| \; &u: [0,R(u))\rightarrow \mathcal{S} \cup \{0 \},\; \text{u is continuous in} \; [0,R(u)),\\ &u(0)=0, \; u(R(u)-) \in \left \{ 0,b-a, -(b-a) \right \}, u(t) \in \mathcal{S} \;\forall t \in (0,R(u)) \; \} ,
\end{split}
\end{equation*}
with $\mathcal{S} := (-(b-a),0) \cup (0,b-a)$. As in the case of $\mathcal{E}(a,b)$, we can write $\mathcal{E}_Z(\mathcal{S}) = \mathcal{E}_0^+ \cup \mathcal{E}_0^-$, where $\mathcal{E}_0^+$ and $\mathcal{E}_0^-$ are the positive and negative excursions from $0$. 

For the continuous, non decreasing, $\mathcal{F}_t$-adapted process $L_t:= L(t,a) + L(t,b)$, the right continuous inverse $\tau_t$ can be defined as follows:
\begin{align*}
   \tau_t:= \inf\{s>0: L_s > t\} 
\end{align*}
The right continuous inverse $\tau_t^a$ and $\tau_t^b$ of $L(t,a)$ and $L(t,b)$ respectively and the right continuous inverse $\tilde{\tau}_t$ of the local time process at $0$ for the process $(Z_t)$, namely $L_Z(t,0)$, can be defined similarly. 

\begin{prop}
$L_Z(t,0) + L_Z(t,0-) = L(t,a) + L(t,b) = L_t$. 
\end{prop}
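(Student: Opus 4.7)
The plan is to recognize this proposition as essentially an immediate consequence of the identifications already established in the proof of Theorem \ref{exunthm}, specifically equation \eqref{LZtLXt}. The second equality $L(t,a) + L(t,b) = L_t$ is nothing but the definition of $L_t$, so the real content is the first equality $L_Z(t,0) + L_Z(t,0-) = L(t,a) + L(t,b)$.

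First I would write down the two relations from \eqref{LZtLXt}, namely $L(t,a) = L_Z(t,0)$ and $L(t,b) = L_Z(t,0-)$. Adding these term by term yields the desired equality directly. Since \eqref{LZtLXt} was already justified during the proof of Theorem \ref{exunthm} by appealing to equations 6 and 7 of \citenum{rajeev1990semi}, no further work is needed in principle; the proposition should be stated as a corollary of those identifications.

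If the reader desires a self-contained argument, the key intuition underlying \eqref{LZtLXt} is the following. During an upcrossing interval $(\tau_i,\sigma_{i+1})$, one has $Z_t = (X_t - a)^+ \geq 0$, so $Z$ touches zero precisely when $X$ visits $a$. Applying the Tanaka formula to $(X_s - a)^+$ identifies $L(t,a)$ with the right local time at $0$ accumulated by $Z$ over the union of upcrossing intervals, namely $L_Z(t,0)$. Symmetrically, on a downcrossing interval $Z_t = (X_t - b)^- \leq 0$ vanishes exactly when $X$ visits $b$, and its contribution to the left local time at zero matches $L_Z(t,0-) = L(t,b)$.

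The main subtlety, already addressed in \citenum{rajeev1990semi}, is establishing the clean separation between the right and left local times of $Z$ at $0$. Because $Z$ takes only nonnegative values on upcrossing intervals and only nonpositive values on downcrossing intervals, the right local time $L_Z(t,0)$ picks up contributions solely from upcrossings (hence matches $L(t,a)$), while $L_Z(t,0-)$ picks up contributions solely from downcrossings (hence matches $L(t,b)$). Once this support disjointness is in hand, summing the two identities completes the proof.
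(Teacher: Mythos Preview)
Your proposal is correct and matches the paper's own proof exactly: the paper simply states that the proof follows from \citenum{rajeev1990semi} and refers the reader back to the proof of Theorem~\ref{exunthm}, i.e., to the identifications in \eqref{LZtLXt}. Your additional intuitive discussion of the upcrossing/downcrossing support separation is a helpful elaboration but goes beyond what the paper itself provides.
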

\begin{proof}
The proof follows from \citenum{rajeev1990semi} (See the proof of theorem \ref{exunthm} above).
\end{proof}

We now define the excursion process; setting $D := \{s>0: \tau_s \neq \tau_{s-} \}$, for $t \in D$, we define:
\begin{align*}
    \widetilde{e}_t(\omega)(s):=X_{\tau_{t-} + \; s\wedge (\tau_t-\tau_{t-}) }(\omega).
\end{align*}

For $t \notin D$, $\widetilde{e}_t(\omega) \equiv \delta$. The above definition of excursion includes all the excursions starting from $a$ and $b$. However, we only need those excursions starting from $a$ or $b$ into the set $(a,b)$. Hence we discard the excursions below $a$ and above $b$ by redefining the excursion process as:
\begin{align*}
    e_t(\omega) := I_{\mathcal{E}(a,b)}(\widetilde{e}_t(\omega))\;\widetilde{e}_t(\omega) + \delta I_{(\mathcal{E}(a,b))^c}\;(\widetilde{e}_t(\omega)).
\end{align*}
where $I_A$ denotes the indicator function of the set $A$. In terms of $(Z_t)$, we can redefine the excursion process as:
\begin{align*}
    e_t^Z(w):=I_{\mathcal{E}_Z(\mathcal{S})}(\widetilde{e}_t^Z(\omega))\;\widetilde{e}_t^Z(\omega) + \delta I_{(\mathcal{E}_Z(\mathcal{S}))^c}\;(\widetilde{e}_t^Z(\omega))
\end{align*}
where $\widetilde{e}_t^Z(\omega)$ is defined similar to $\widetilde{e}_t(\omega)$, by replacing $(X_t)$ by $(Z_t)$. 

Let $\mathcal{E}'_a$ be the set of excursions from $a$, that is, 
\begin{align*}
    \mathcal{E}'_a = \{ u: [0,R(u)) \rightarrow \mathbb{R}, u \; \text{is continuous on} \; [0,R(u))\; \\ \text{and} \; u(t) \neq a \; \text{for} \; t \in (0,R(u)), \; u(0) = u(R(u)-) = a \}
\end{align*}
and 
\begin{align*}
(\mathcal{E}'_a)^+ = \{ u \in \mathcal{E}'_a | u(t) > a \;\; \text{for} \; t \in (0,R(u))\}
\end{align*}
is the set of positive excursions starting from $a$. Defining $(\mathcal{E}'_a)^-$ similarly, we have $\mathcal{E}_a' = (\mathcal{E}'_a)^+ \bigcup (\mathcal{E}'_a)^-$.
Similar definitions hold for the set of excursions starting from $b.$ 

Define the map $t_b(u)$ for $u \in (\mathcal{E}_a')^+$ as follows:
\begin{align*}
    t_b(u) := \inf \; \{ t | \; u(t) = b \}.
\end{align*}
Now define 
\begin{align*}
    \eta_a := \; (\mathcal{E}_a')^+ &\rightarrow \mathcal{E}_a \\
     u &\mapsto \eta_a(u) := u(t_b \wedge \cdot)
\end{align*}
We can similarly define $t_a$ and $\eta_b$. 

Let $n'_a$, $n'_b$ be the characteristic measures on $\mathcal{E}_a'$ and $\mathcal{E}_b'$ respectively of the Poisson point process $N_a$ and $N_b$ associated with the excursions of $(X_t)$ starting from $a$ and $b$ respectively. Define $n(x,A)$ on $\mathcal{E}(a,b) = \mathcal{E}_a \cup \mathcal{E}_b$ as follows: 
\begin{align}
    n(x,A) := n'_x(\eta_x^{-1}(A \cap \mathcal{E}_x)) , \; \text{where} \; x \in \{a,b\} \; \text{and} \; A \subseteq \mathcal{E}(a,b)
\end{align}

Using these definitions of excursion process, we can now characterize reactive trajectories. First we set:
\begin{align*}
\Lambda &:=\{u \in \mathcal{E}(a,b) : u(0)=a, \underset{0<t<R(u)}\sup u(t) =b \} \subseteq \mathcal{E}_a\\
\Lambda_Z &:= \{u \in \mathcal{E}_Z(\mathcal{S}): \underset{0<t<R(u)}\sup u(t) =b-a \} \subseteq \mathcal{E}_0^+
\end{align*}
Then, we define:
\begin{align*}
    \Lambda' := \{ u \in (\mathcal{E}_a')^+| u(0) = a, \underset{0<t<R(u)}\sup u(t) \geq b \} = \eta_a^{-1}(\Lambda )  
\end{align*}
With this, we define the quantity $N(t,A)$ for $\; A \subseteq \mathcal{E}(a,b)$: 
\begin{align*}
    N(t,A) := \# \{ s\leq t: \tau_{s-} \neq \tau_s, e_s(\omega) \in A  \} = N_a(t, \eta_a^{-1}(A)) + N_b(t,\eta_b^{-1}(A))
\end{align*}
$N_Z(t,A)$ can be defined similarly. For computing the rate of the reaction we need to calculate the number of reactive trajectories, $N_t$ until time $t$.

\begin{prop}
 The number $N_t$ of reactive trajectories until time $t$, is given as follows:
  \begin{align*}
       N_t = N(L_t,\Lambda) &= N_a(L(t,a),\Lambda'), \\
      where \;\; L_t  = L(t,a) & + L(t,b), \; \text{ as  defined  above.}
  \end{align*}

\end{prop}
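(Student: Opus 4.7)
The plan is to verify the two equalities separately. The first, $N_t = N(L_t, \Lambda)$, is a pathwise identification of reactive intervals with Maisonneuve excursions of $(X_t)$ from $\{a,b\}$ lying in $\Lambda$; the second, $N(L_t, \Lambda) = N_a(L(t,a), \Lambda')$, is a bookkeeping identity that uses $\Lambda \subseteq \mathcal{E}_a$ to discard the contribution from excursions starting at $b$.

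For the first equality, note that $L_t = L(t,a) + L(t,b)$ grows only on $\{s : X_s \in \{a,b\}\}$, so its right-continuous inverse $\tau_s$ has a jump at precisely those $s \in D$ for which $X$ performs a Maisonneuve excursion away from $\{a,b\}$ on $(\tau_{s-}, \tau_s)$. The raw excursion $\widetilde{e}_s$ is the restriction of $X$ to $[\tau_{s-}, \tau_s]$ extended as a constant; after filtering by $I_{\mathcal{E}(a,b)}$, only excursions that remain in $[a,b]$ survive. Within these, membership in $\Lambda$ is equivalent to $X_{\tau_{s-}} = a$ and $\sup_{u \in (\tau_{s-}, \tau_s)} X_u = b$, which by continuity of $(X_t)$ is equivalent to $X_{\tau_{s-}} = a$ and $X_{\tau_s} = b$. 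This is precisely the condition $X_{a_i} = a$, $X_{b_i} = b$ defining a reactive interval $(a_i, b_i) = (\tau_{s-}, \tau_s)$, yielding a bijection between the indices $s \in D$ with $e_s \in \Lambda$ and the reactive intervals. Restricting to $s \leq L_t$ corresponds to restricting to reactive intervals completed by real time $t$; positive recurrence of $(X_t)$ (Proposition \ref{posrecur}) guarantees that these are almost surely finite in number for each finite $t$.

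For the second equality, invoke the decomposition stated in the definition of $N(t, A)$ applied at the random time $L_t$:
\begin{align*}
N(L_t, \Lambda) = N_a\bigl(L(t,a),\, \eta_a^{-1}(\Lambda \cap \mathcal{E}_a)\bigr) + N_b\bigl(L(t,b),\, \eta_b^{-1}(\Lambda \cap \mathcal{E}_b)\bigr).
\end{align*}
Since every $u \in \Lambda$ satisfies $u(0) = a$, we have $\Lambda \cap \mathcal{E}_b = \emptyset$ and the second term vanishes. By the definitions of $\eta_a$ and $\Lambda'$ we have $\eta_a^{-1}(\Lambda) = \Lambda'$, yielding the claimed identity.

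The main difficulty is the pathwise identification in the first step: one must confirm that no reactive interval is inadvertently excluded or double-counted by the indexing through jumps of $\tau_s$ and by the $\mathcal{E}(a,b)$-truncation. This reduces to observing that reactive intervals are the maximal excursion intervals of $X$ into $(a,b)$ starting at $a$ and ending at $b$; excursions of $X$ into $(-\infty, a) \cup (b, \infty)$ that are discarded by the truncation are never reactive intervals, while excursions in $[a,b]$ that either start at $b$ or fail to reach $b$ fall outside $\Lambda$. Some care with the continuity of $(X_t)$ and with the identification of $L_t$ as the correct clock for the Maisonneuve excursion process (so that $s \leq L_t$ really captures completion by real time $t$) is also required.
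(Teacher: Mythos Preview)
Your proposal is correct and follows essentially the same approach as the paper: identify reactive trajectories with upcrossings, match each upcrossing to an excursion in $\Lambda$ via the local-time clock, and then observe that since $\Lambda \subseteq \mathcal{E}_a$ only the Itô excursions from $a$ contribute, so the clock may be replaced by $L(t,a)$. The paper's argument is briefer---it simply asserts that ``the number of excursions from $a$ of height $b-a$ counted via $L_t$ and $L(t,a)$ must be the same''---while you spell out the decomposition and the vanishing of the $\mathcal{E}_b$-term explicitly; note, however, that in doing so you have (correctly) adjusted the time arguments in the decomposition to $L(t,a)$ and $L(t,b)$ rather than the common $t$ appearing in the paper's displayed definition of $N(t,A)$.
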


\begin{proof}
    By construction, the support of the local time process $dL_t$,
    \begin{center}
        $\supp(dL_t) \subseteq \left \{ t: X_t = a \;\; or \;\; X_t = b \right \}$.
    \end{center} 
    Note that $L_t$ is constant during the excursion interval $(\tau_{t-},\tau_t)$. The number of reactive trajectories until the time $t$ is, by definition, equal to the number of upcrossings of $(X_t)$ completed before $t$. So, each upcrossing of $(a,b)$ by $s \mapsto X_s(\omega)$ completed before $t$ corresponds to an excursion $e_s(\omega)$ of $(X_t)$ into the set  $ \Lambda$, with $\tau_s \leq t$. Since $\tau_t$ is the right inverse of the local time process $L_t$, we have that:
    \begin{align*}
        N_t = N({L_t},{\Lambda})
    \end{align*}
    Clearly the number of excursions from $a$ of height $b-a$ counted via $L_t$ and $L(t,a)$ must be the same. So we have:
    \begin{align*}
        N_t = N_a({L(t,a)},\Lambda')
    \end{align*}
    
\end{proof}

\section{Computing reaction rate using excursion theory}\label{reactionratesect}

The rate $\kappa$ of the reaction mentioned in \eqref{reactionrate} can be expressed as:
\begin{align*}
    \kappa = \underset{t\rightarrow\infty}\lim \frac{N_t}{t} = \underset{t\rightarrow\infty}\lim \frac{N(L_t,\Lambda)}{t} =
    \underset{t\rightarrow\infty}\lim\frac{N_a(L(t,a),\Lambda')}{t}
\end{align*}
This can be rewritten as:
\begin{equation}\label{ExcRate}
    \kappa = \underset{t\rightarrow\infty}\lim\frac{N(L_t,\Lambda)}{L_t} \; \underset{t\rightarrow\infty}\lim\frac{L_t}{t} = 
    \underset{t\rightarrow\infty}\lim\frac{N_a(L(t,a),\Lambda')}{L(t,a)}\;
    \underset{t\rightarrow\infty}\lim\frac{L(t,a)}{t} 
\end{equation}
provided both the limit exists (Note that $N(t,A) \leq \infty$ for an arbitrary set $A \in \mathcal{E}(a,b)$. However, $N_a(t,A) < \infty$ a.e. iff $n'_a(A) < \infty$). We show that is the indeed the case.

\begin{prop}\label{2limits}
\begin{align*}
    \underset{t\rightarrow\infty}\lim\frac{L(t,a)}{t} = \frac{E_aL(\tau_1,a)}{E_a\tau_1} = \frac{ -2 E_a \int_0^{\tau_1}I_{\{X_s>a\}}b(X_s)ds}{E_a \tau_1} < \infty
\end{align*}
\end{prop}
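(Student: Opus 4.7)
The plan is to establish the first equality via the strong law of large numbers (SLLN) applied to the regenerative structure of $(X_t)$ at $a$, and to derive the second equality from Tanaka's formula combined with optional stopping at the cycle time $\tau_1$.

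For the first equality, I would argue as follows. By Proposition \ref{posrecur} and Remark \ref{tau1remark}, under $P_a$ the cycle times $\tau_n - \tau_{n-1}$ are iid with finite mean $E_a \tau_1 < \infty$. Since $L(\tau_1, a)$ is a measurable functional of the cycle path $X|_{[0,\tau_1]}$, the regenerative property at $a$ implies that the increments $L(\tau_n, a) - L(\tau_{n-1}, a)$ are iid with the same law as $L(\tau_1, a)$. The SLLN then gives, $P_a$-a.s., $\tau_n/n \to E_a \tau_1$ and $L(\tau_n, a)/n \to E_a L(\tau_1, a) \in [0, \infty]$. To pass from the discrete times $\tau_n$ to continuous $t$, set $N(t) := \sup\{n : \tau_n \leq t\}$; the monotonicity of $L(\cdot, a)$ yields the sandwich
\begin{align*}
\frac{L(\tau_{N(t)}, a)}{\tau_{N(t)+1}} \;\leq\; \frac{L(t, a)}{t} \;\leq\; \frac{L(\tau_{N(t)+1}, a)}{\tau_{N(t)}},
\end{align*}
and both bounds converge a.s. to $E_a L(\tau_1, a) / E_a \tau_1$.

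For the second equality, I would apply the Tanaka formula to the convex function $x \mapsto (x - a)^+$ on the continuous semimartingale $(X_t)$ and substitute \eqref{SDE}, yielding under $P_a$,
\begin{align*}
(X_t - a)^+ = \sigma \int_0^t I_{\{X_s > a\}} \, dB_s + \int_0^t I_{\{X_s > a\}} \, b(X_s) \, ds + \frac{1}{2} L(t, a).
\end{align*}
Specializing to $t = \tau_1$ makes the left-hand side vanish because $X_{\tau_1} = a$; taking $E_a$ of both sides then yields the stated identity $E_a L(\tau_1, a) = -2\, E_a \int_0^{\tau_1} I_{\{X_s > a\}} b(X_s) \, ds$, provided the stopped stochastic integral has zero mean.

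The main technical obstacle is precisely this optional-stopping step, since $\tau_1$ is not deterministically bounded. I would resolve it as follows: $\sigma I_{\{X_s > a\}}$ is a uniformly bounded predictable integrand, so the local martingale $M_t := \sigma \int_0^t I_{\{X_s > a\}} \, dB_s$ has quadratic variation satisfying $\langle M \rangle_t \leq \sigma^2 t$, and since $E_a \tau_1 < \infty$ by positive recurrence, $E_a \langle M \rangle_{\tau_1} \leq \sigma^2 E_a \tau_1 < \infty$. Hence $M_{\cdot \wedge \tau_1}$ is a square-integrable (in particular uniformly integrable) martingale, and the optional stopping theorem gives $E_a M_{\tau_1} = 0$. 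Finally, the construction of $U$ in \eqref{PES} ensures that $b$ is bounded on $\mathbb{R}$, so $|E_a \int_0^{\tau_1} I_{\{X_s > a\}} b(X_s) \, ds| \leq \|b\|_\infty E_a \tau_1 < \infty$, which simultaneously establishes the second identity and the claimed finiteness of the limit.
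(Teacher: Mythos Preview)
Your proposal is correct and follows essentially the same approach as the paper: both use the regenerative/iid structure of the cycle increments with the SLLN, the same sandwich $\frac{L(\tau_n,a)}{\tau_{n+1}}\le \frac{L(t,a)}{t}\le \frac{L(\tau_{n+1},a)}{\tau_n}$, and Tanaka's formula at $t=\tau_1$ together with boundedness of $b$ and $E_a\tau_1<\infty$ for the second identity and finiteness. If anything, your justification of the optional-stopping step (via $E_a\langle M\rangle_{\tau_1}\le \sigma^2 E_a\tau_1<\infty$) is more explicit than the paper's, which simply notes that the expectations are finite because $b$ is bounded and $E_a\tau_1<\infty$.
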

\begin{proof}
The proof is given in 2 steps: \\
\textbf{Step 1:} To show 
\begin{align}
\underset{n\rightarrow\infty}\lim\frac{L(\tau_n,a)}{n} = E_a L(\tau_1,a) < \infty 
\end{align}
where $\tau_n$ is as defined in section \ref{Prelims}.

\hspace{-0.5cm}\textbf{Step 2:} To show 
\begin{align}
    \underset{t\rightarrow\infty}\lim\frac{L(t,a)}{t} = \frac{E_a L(\tau_1,a)}{E_a\tau_1}
\end{align}

\textit{Proof of step 1:}
That the limit in the LHS exists follows from a standard `Law of large numbers' argument; denoting $\Xi_k := L(\tau_k,a) - L(\tau_{k-1},a) $, we have, by the regenerative property:
\begin{align*}
    L(\tau_n,a) = \sum_{k=1}^n \Xi_k
\end{align*}where 
$\Xi_k$ are independent and identically distributed with
\begin{subequations}\label{ExpLt}
\begin{align}
  E_a \Xi_1 &= E_a L(\tau_1,a)  \\
  &= 2E_a \left ( (X_{\tau_1}-a)^+ - (X_0 -a)^+ - \int_0^{\tau_1} I_{\{X_s>a\}} dX_s \right ) \\
  &=   -2 E_a \int_0^{\tau_1}I_{\{X_s>a\}}b(X_s)ds = -\frac{2}{\sigma^2} \int_a^{\infty}b(y)\;E_aL(\tau_1,y)dy
\end{align}
\end{subequations}

In the expression above, the second equality follows from the Tanaka formula \cite{kallenberg1997foundations} and the last equality follows from the `Occupation density formula'(See Chapter 3, theorem 7.1(iii)  in \citenum{karatzas2012brownian}) for the semimartingale $(X_t)$ and a bounded Borel function. Note that the expectations are finite because $E_a \tau_1 < \infty$ and $b$ is a bounded function. This completes the proof of step 1. 


\textit{Proof of step 2:}
Using a standard argument, we can write:
\begin{align*}
    \frac{L(\tau_n,a)}{\tau_{n+1}} &\leq \frac{L(t,a)}{t} \leq \frac{L(\tau_{n+1},a)}{\tau_{n}}
\end{align*}

for $\tau_{n} \leq t \leq \tau_{n+1}$.

Now the proof follows using step 1 and the fact observed earlier in the proof of proposition \ref{posrecur} and remark \ref{tau1remark} that \begin{align*}
    \lim_{n\rightarrow\infty} \frac{\tau_n}{n} \rightarrow E_a \tau_1
\end{align*}
\end{proof}

\begin{remark}
Considering \eqref{ExpLt} for an arbitrary point $z \in \mathbb{R}$, we get:
\begin{align*}
    E_a L(\tau_1,z) = -\frac{2}{\sigma^2} \int_z^{\infty} b(y) E_a L(\tau_1,y) dy
\end{align*}
\end{remark}
Setting $f(z):= E_a L(\tau_1, z)$, this yields an ordinary differential equation, whose general solution is given as $f(z) \sim \exp(\sfrac{2}{\sigma^2}{\int_0^z  b(x) dx})$, thereby yielding:
\begin{align}\label{ExpLtU}
    E_a L(\tau_1, z) = C e^{-\beta (U(z) - U(0))}
\end{align}
where $\beta= \sfrac{1}{k_B T}$. Thus, the expected value of local time is highest near the minima of the PES $U(x)$, which are the points $a$ and $b$ and explicitly:
\begin{align}\label{ExpLta}
    E_a L(\tau_1, a) \sim e^{-\beta (U(a) - U(0))} = e^{\beta V_b}
\end{align}
where $V_b$ is the height of the potential energy barrier in $U(x)$, as $0$ corresponds to a local maxima of $U(x)$. In the above expression, $a \sim b \iff a=C b$ for some constant $C$.

\begin{prop}
The excursion measure of $\Lambda'$, namely $n'_a(\Lambda')$, is finite and in particular, $(N_a(t,\Lambda'))_{t\geq0}$ is a Poisson process with
\begin{align*}
    EN_a(t,\Lambda') = t \; n'_a(\Lambda')
\end{align*}
\end{prop}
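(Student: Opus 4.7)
The plan is to deduce both statements from the Ito--Maissonneuve theorem that the excursion process from $a$, indexed by its local time $L(\cdot,a)$, is a Poisson point process on $\mathcal{E}_a'$ with characteristic measure $n_a'$. Once this structural fact is in hand, for any measurable $A \subseteq \mathcal{E}_a'$ the counting process $s \mapsto N_a(s,A)$ is automatically a Poisson process with (possibly infinite) intensity $n_a'(A)$, so the substantive task reduces to proving $n_a'(\Lambda') < \infty$.

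The key observation that yields finiteness is the identity $N_a(L(\tau_1,a),\Lambda') = N_{\tau_1} = 1$ a.s., which follows from the previous proposition together with the definition of $\tau_1$: during the regenerative cycle $[0,\tau_1]$, starting from $X_0 = a$, the diffusion makes exactly one upcrossing of $(a,b)$, namely the one completed at $\sigma_1 = T_b$. I would then identify $L(\tau_1,a)$ with the first arrival time of the Poisson process $s \mapsto N_a(s,\Lambda')$. Indeed, the excursion of $(X_t)$ from $a$ containing $\sigma_1$ is necessarily the first excursion from $a$ to land in $\Lambda'$ (any earlier excursion in $\Lambda'$ would already have hit $b$, contradicting $\sigma_1 = T_b$), and $L(\cdot,a)$ is constant on each excursion interval, so the local-time level at which this excursion is indexed is exactly $L(\tau_1,a)$. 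Hence $L(\tau_1,a)$ is the first jump time of a Poisson process of rate $n_a'(\Lambda')$, i.e.\ exponentially distributed with mean $1/n_a'(\Lambda')$ when the rate is finite, and identically $0$ a.s.\ when the rate is infinite.

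Finiteness is then immediate from the explicit estimate already in hand: by \eqref{ExpLta} we have $E_a L(\tau_1,a) > 0$, so $P_a\{L(\tau_1,a) > 0\} > 0$, forcing $n_a'(\Lambda') < \infty$. As a bonus, the dichotomy above pins down the constant, $n_a'(\Lambda') = 1/E_a L(\tau_1,a)$, which will feed naturally into the factorization of $\kappa$ in \eqref{ExcRate}. With finiteness established, the Poisson process property of $(N_a(t,\Lambda'))_{t\geq 0}$ and the mean formula $EN_a(t,\Lambda') = t\, n_a'(\Lambda')$ are standard consequences of the Poissonian structure of the excursion point process.

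The main obstacle, in my view, is the clean bookkeeping between the three clocks at play: real time $t$, local time $L(\cdot,a)$, and the index of the excursion point process. Once one recognizes that $N_a(\cdot,\Lambda')$ lives on the local time clock and that $L(\tau_1,a)$ coincides with its first jump, the argument is short and the rest follows from classical Poisson point process theory.
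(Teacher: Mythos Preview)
Your argument is correct and takes a genuinely different route from the paper. The paper proves finiteness by a truncation-and-limit argument: it restricts to excursions of length at least $h$ (where the intensity is automatically finite), passes to the limit $h\downarrow 0$ via the exponential formula, and then invokes the a.s.\ finiteness of $N_a(t,\Lambda')$---established directly from the continuity of $X$ as the number of upcrossings of $(a,b)$ up to $\tau_t^a$---to force the right-hand side of the exponential formula to be positive, hence $n'_a(\Lambda')<\infty$. Your approach instead pins down the \emph{first} jump of $s\mapsto N_a(s,\Lambda')$ at the local-time level $L(\tau_1,a)$ and uses the dichotomy ``infinite rate $\Rightarrow$ first jump at $0$ a.s.''\ together with $P_a\{L(\tau_1,a)>0\}>0$ (which already follows from regularity of $a$, so you do not even need \eqref{ExpLta}). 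This is shorter and, as you note, immediately yields $n'_a(\Lambda')=1/E_aL(\tau_1,a)$, an identity the paper only recovers later in \eqref{kappa} by matching the excursion-theoretic and renewal-theoretic expressions for $\kappa$. The paper's route, on the other hand, is agnostic to the regenerative cycle structure and would transfer more directly to settings where a convenient ``first cycle'' random time is not readily available. One small point of care in your write-up: the equality $L(T_b,a)=L(\tau_1,a)$ (needed to identify the first jump time) should be stated explicitly, as it uses that the entire interval from the last visit to $a$ before $T_b$ until $\tau_1$ is a single excursion from $a$, on which $L(\cdot,a)$ is constant.
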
 

\begin{proof}
For a Poisson random measure $\xi$, we have $\xi(A) < \infty \; a.s.$ if and only if $\mu(A):=E\xi(A)$ is finite. This follows from the `exponential formula' (Refer to lemma 12.2 part (i) in \citenum{kallenberg1997foundations}): 
\begin{align*}
E e^{-\xi(A)} = e^{-\int (1 - e^{-I_A}) d\mu}
\end{align*} 

Let $\Lambda^a_h$ be the set of excursions of length $h$ from $a$, with $\Lambda^a_h = \Lambda_h^{a,+} \bigcup \Lambda_h^{a,-}$ where $\Lambda_h^{a,+}$ and $\Lambda_h^{a,-}$ are the positive and negative excursions from $a$.  Note that $n'_a(\Lambda^a_h) < \infty$ for $h >0$ (Refer \citenum{kallenberg1997foundations} for proof). We define Poisson random measures $\Xi_h(\cdot) := N_a(t,\cdot \cap \Lambda_h^a)$ for $h>0$ \cite{kallenberg1997foundations} and note that $\Xi_h(\Lambda') < \infty \; a.s.$, as $n'_a(\Lambda_h^a) < \infty$ for each $h>0$. Defining a sequence  $\Xi_{h_n}$ of Poisson random measures with $h_n \downarrow 0$, we have:
\begin{align*}
    E\:\Xi_{h_n}(\Lambda') = t\; n'_{a}(\Lambda'\cap \Lambda^a_{h_n}) 
\end{align*}

The sequence $\Xi_{h_n}(\Lambda')$ converges $a.s.$ by the monotone convergence theorem, to a random variable $\Xi(\Lambda')$ for which,
\begin{align}\label{expformula}
Ee^{-\Xi(\Lambda')} &= \lim_{n\rightarrow\infty} E e^{-\Xi_{h_n}(\Lambda')} = \lim_{n\rightarrow\infty} e^{-\int (1 - e^{-I_{\Lambda'}}) d\mu_{h_n}} \\ \nonumber
&= \lim_{n\rightarrow\infty}\exp{(-\mu_{h_n}(\Lambda')(1-e^{-1}))} = \exp{(-\mu(\Lambda')(1-e^{-1}))}
\end{align}

Now, $\Xi(\Lambda') = N_a(t,\Lambda')$ by definition. So, by the continuity of the trajectories of $(X_t)$,
\begin{align*}
    N_a(t,\Lambda') &= \text{Number of upcrossings of $(a,b)$ by the process $(X_s)$} \\ &\;\;\;\;\;\text{during the interval $(0,\tau_t)$} \\
    \Rightarrow N_a(t,\Lambda') &  < \infty \; \textit{a.s.}
\end{align*}

Since the LHS in \eqref{expformula} is positive, it follows that $\mu(\Lambda') <\infty$. Thus, the sequence $\Xi_{h_n}(\Lambda')$ converges in distribution to $\Xi(\Lambda')$. That $\Xi(\Lambda')$ is a Poisson random variable follows from \eqref{expformula}. 

Also, we have: 
\begin{align*}
    \mu(\Lambda') \equiv E\: \Xi(\Lambda') = \lim_{n\rightarrow\infty} E \: \Xi_{h_n}(\Lambda') = t \; \lim_{n\rightarrow\infty} n'_a(\Lambda'\cap \Lambda^{h_n}_a) = t \; n'_a(\Lambda') < \infty 
\end{align*}
From the independent increment property of $(N_a(t,\Lambda'\cap\Lambda^a_h))_{t\geq0}$, the corresponding property of $(N_a(t,\Lambda'))_{t\geq0}$ follows by letting $h\rightarrow0$.
\end{proof}

\begin{cor}
The first limit in \eqref{ExcRate} exists with
\begin{equation}
    n'_a(\Lambda') =  \underset{t\rightarrow\infty}\lim \frac{N_a{(t,\Lambda'})}{t} = \underset{t\rightarrow\infty}\lim \frac{N_a(L(t,a),\Lambda')}{L(t,a)}, \;\; a.s.
\end{equation}
\end{cor}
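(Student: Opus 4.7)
The plan is to split the assertion into the two separate equalities $n'_a(\Lambda') = \lim_{t\rightarrow\infty} N_a(t,\Lambda')/t$ and $\lim_{t\rightarrow\infty} N_a(t,\Lambda')/t = \lim_{t\rightarrow\infty} N_a(L(t,a),\Lambda')/L(t,a)$, and reduce both to the Poisson structure of $(N_a(t,\Lambda'))_{t\geq 0}$ just established in the preceding proposition.

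The first equality is simply the strong law of large numbers for a Poisson process. Since $EN_a(t,\Lambda') = t\,n'_a(\Lambda')$ with $n'_a(\Lambda')<\infty$ and the increments are stationary and independent, partitioning into unit intervals gives iid Poisson$(n'_a(\Lambda'))$ summands $N_a(n,\Lambda')-N_a(n-1,\Lambda')$; the classical SLLN yields the limit along integer times, and monotonicity of $t\mapsto N_a(t,\Lambda')$ sandwiches the continuous-parameter limit to the same value $n'_a(\Lambda')$ almost surely.

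For the second equality I would use a pathwise time change $s = L(t,a)$. Two ingredients are needed. First, the Poisson SLLN above holds on an a.s.\ event $\Omega_1$ on which $N_a(s_n,\Lambda')/s_n \rightarrow n'_a(\Lambda')$ for \emph{every} deterministic sequence $s_n\rightarrow\infty$, and hence for any a.s.\ divergent random time. Second, $L(t,a)\uparrow\infty$ a.s.: step 1 of Proposition \ref{2limits} gives $L(\tau_n,a)/n \rightarrow E_a L(\tau_1,a)$ a.s., and \eqref{ExpLta} shows this limit is strictly positive, so $L(\tau_n,a)\rightarrow\infty$ a.s.; combined with $\tau_n\uparrow\infty$ a.s.\ (Remark \ref{tau1remark}) and monotonicity of $t\mapsto L(t,a)$, we obtain $L(t,a)\rightarrow\infty$ a.s. Substituting $s = L(t,a)$ on the full-measure intersection of these two events delivers the second equality.

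The main obstacle is essentially bookkeeping: one must carry out the substitution $s = L(t,a)$ pathwise on a single full-measure event where both the Poisson SLLN and $L(t,a)\rightarrow\infty$ hold. Since the Poisson SLLN gives convergence for \emph{every} path in $\Omega_1$ along \emph{every} divergent sequence, no uniformity or stopping-time-type refinement is required, and the composition of limits goes through directly.
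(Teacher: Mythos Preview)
Your proposal is correct and follows essentially the same route as the paper: SLLN for the Poisson process $(N_a(t,\Lambda'))_{t\geq 0}$ along integers, a monotonicity sandwich to pass to continuous $t$, and then a pathwise substitution $s=L(t,a)$ on the intersection of the two full-measure events. The only difference is that you are more explicit than the paper in justifying $L(t,a)\uparrow\infty$ a.s.\ (via Step~1 of Proposition~\ref{2limits} and \eqref{ExpLta}), which the paper simply asserts.
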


\begin{proof}
It follows from the independent increment property of the process $(N_a(t,\Lambda'))_{t\geq0}$ and the law of large numbers that 
\begin{align*}
    \lim_{k\rightarrow\infty} \frac{N_a(k,\Lambda')}{k} = n'_a(\Lambda'), \;\; a.s.
\end{align*}
The first equality in the statement follows by the usual interpolation argument as in the proof of proposition \ref{2limits}. The second equality in the statement follows from the first equality and the fact that $L(t,a) \rightarrow \infty$ monotonically as $t\rightarrow\infty$, $a.s$. The null set in the statement is obtained by the intersection of the null sets in the above two steps. 
\end{proof}

\begin{remark}
The factorization of the reaction rate expression \eqref{kappa} for $\sfrac{N_t}{t}$ as $t \rightarrow \infty$ in the positive recurrent case contrasts sharply with the null recurrent case of Brownian motion (see  \citenum{rajeev2013brownian}).

\end{remark}

Thus, \eqref{ExcRate} factorizes the reaction rate into a product of a local time term  and an excursion measure term $n'_a(\Lambda')$. This suggests that the local time at $a$ represents the fluctuations of the process $(X_t)$ around $a$, and the excursion measure captures the `transition probability' of going from $a$ to $b$, which is consistent with the expression for reaction rate. In the next section, we follow a alternate procedure using renewal theory, familiar in the theory of discrete Markov processes, to obtain the rate of the reaction. Using the expression thus obtained, we obtain an explicit expression for the excursion measure $n'_a(\Lambda')$. 


\section{Transition path excursions as renewal events}\label{renewalthsect}

Regenerative processes, such as $(X_t)$, can be decomposed into independent, identically distributed blocks with varying path length, and an analysis that exploits this structure of a regenerative process is provided by \textit{renewal theory} \cite{resnick1992adventures}. Such blocks satisfy the \textit{renewal limit theorem}, which allows us to calculate the reaction rate in \eqref{reactionrate} which shall be the concern of this section. We first begin with a brief introduction to the theory of discrete renewal processes following \citenum{resnick1992adventures}. 

Consider a sequence $\{Y_n\}_{n\geq0}$ of independent, positive, real valued random variables such that $\{Y_n\}_{n\geq1}$ is identically distributed with a distribution $F$, with $F(0-)=0$ and $F(0)<1$. Define the \textit{renewal sequence}, for $n\geq0$,
\begin{align*}
    S_n = \sum_{i=0}^{n}Y_i
\end{align*}
The \textit{counting function}, which counts the number of renewals until a given time $t$, is:
\begin{align*}
    C(t) := \sum_{n=0}^{\infty} I_{[0,t]}(S_n)
\end{align*}
The expectation of the counting function is the \textit{renewal function}, 
\begin{align*}
    \mu(t) := EC(t)
\end{align*}
If the distribution of $Y_n = S_n-S_{n-1}, n\geq1$ has finite mean $\mu$, then we have the following \textit{Renewal theorem}:

\begin{theorem}{Renewal theorem}\\
If the distribution $F$ of $Y_i,i\geq1$  has finite mean, i.e. $EY_1 = \mu <\infty$, then we have the following:
\begin{align*}
    \underset{t\rightarrow \infty}\lim\frac{C(t)}{t} = \underset{t\rightarrow \infty}\lim\frac{\mu(t)}{t}  = \frac{1}{\mu }
\end{align*}
\end{theorem}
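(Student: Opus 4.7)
The plan is to establish the two limits in sequence: first the almost-sure convergence $C(t)/t \to 1/\mu$ via the strong law of large numbers, then the expectation version $\mu(t)/t \to 1/\mu$ by combining Fatou's lemma (for the lower bound) with a truncation plus Wald identity argument (for the upper bound). This mirrors the classical treatment of the elementary renewal theorem.

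For the pathwise limit, I would apply the SLLN to the i.i.d.\ family $\{Y_n\}_{n\geq 1}$: since $Y_0$ is finite almost surely, $Y_0/n \to 0$, and therefore $S_n/n \to \mu$ almost surely. Positivity of the $Y_i$ together with $\mu > 0$ forces $S_n \uparrow \infty$, whence $C(t) \uparrow \infty$ as $t \to \infty$. The renewal sandwich
\begin{align*}
S_{C(t)-1} \leq t < S_{C(t)},
\end{align*}
divided through by $C(t)$ and combined with $S_n/n \to \mu$, yields $t/C(t) \to \mu$ almost surely as $t \to \infty$, which is equivalent to $C(t)/t \to 1/\mu$ almost surely.

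For the expectation, Fatou's lemma applied to the nonnegative random variables $C(t)/t$ immediately gives $\liminf_{t\to\infty} \mu(t)/t \geq 1/\mu$. The main obstacle is the matching upper bound, because almost-sure convergence alone does not guarantee convergence of expectations. I would overcome this by truncation: fix $K > 0$, set $Y_i^K := Y_i \wedge K$ for $i \geq 1$, and let $C^K(t)$ be the counting function associated with $S_n^K := Y_0 + Y_1^K + \cdots + Y_n^K$. Since $Y_i^K \leq Y_i$ we have $S_n^K \leq S_n$, and consequently $C(t) \leq C^K(t)$. A standard renewal estimate ensures $E[C^K(t)] < \infty$ (the truncated jumps are bounded and positive with positive probability), legitimizing Wald's identity. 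The bounded-jump inequality
\begin{align*}
\sum_{i=1}^{C^K(t)} Y_i^K \leq t + K
\end{align*}
taken in expectation together with Wald yields $E[Y_1^K]\, E[C^K(t)] \leq t + K$, hence
\begin{align*}
\frac{\mu(t)}{t} \leq \frac{E[C^K(t)]}{t} \leq \frac{t+K}{t\, E[Y_1^K]}.
\end{align*}
Letting $t \to \infty$ and then $K \to \infty$ (so that $E[Y_1^K] \uparrow \mu$ by monotone convergence) produces $\limsup_{t\to\infty} \mu(t)/t \leq 1/\mu$, and together with the Fatou lower bound this establishes the required convergence.
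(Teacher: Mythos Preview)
The paper does not prove this theorem: it is quoted as the classical renewal theorem from Resnick's text and used as a black box. Your proposal supplies the standard textbook proof (SLLN plus the sandwich $S_{C(t)-1}\le t<S_{C(t)}$ for the almost-sure limit, Fatou plus truncation/Wald for the elementary renewal theorem), and the argument is correct as written. So there is nothing to compare against; you have simply filled in a proof the authors chose to cite rather than reproduce.
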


An important consequence of the regenerative property of the process $(X_t)$ is the following `cycle decomposition', which is sometimes taken as an alternative definition of regenerative processes \cite{resnick1992adventures}.  

\begin{prop}\label{CycleDecomp}
The process $\{X_t\}_{ t\geq 0}$ has the following cycle decomposition: There exists a renewal sequence $\{S_n\}_{n\geq0}$, such that $S_n \rightarrow \infty$, and for every $k, 0<t_1<...<t_k$, $B \in \mathcal{B}(\mathbb{R}_+^k)$, $A \in \mathcal{B}(\mathbb{R}_+^\infty)$, $\forall n\geq0$, we have:
\begin{align*}
    P_a[(X(S_n+t_i), i=1,2...,k) \in B, \{S_{n+i}-S_n,i\geq1\} \in A | \;S_0,... S_n] = \\
    P_a[(X(t_i), i=1,2...,k) \in B, \{S_{i}-S_0,i\geq1\} \in A]
\end{align*}
\end{prop}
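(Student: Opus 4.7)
The plan is to take the renewal sequence to be $S_n := \tau_n$, where $\tau_n$ is the sequence of hitting times of $a$ (after having visited $b$) defined in Section \ref{Prelims}. The basic renewal-sequence requirements follow immediately from results already established in the excerpt: by Remark \ref{tau1remark}, the increments $Y_n := \tau_n - \tau_{n-1}$ are i.i.d.\ under $P_a$ with common mean $E_a\tau_1 = E_a T_b + E_b T_a$, which is finite by positive recurrence (Proposition \ref{posrecur}). Each $Y_n$ is strictly positive a.s.\ because $\tau_n - \tau_{n-1}$ includes an excursion from $a$ to $b$ and back. Taking $S_0 = \tau_0 = 0$, the strong law of large numbers applied to these i.i.d.\ positive summands yields $S_n/n \to E_a\tau_1 > 0$ a.s., so in particular $S_n \to \infty$.

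For the main identity, the idea is a direct application of the strong Markov property at the stopping time $\tau_n$, as stated in \eqref{strongMarkov}. The key observation is that both ingredients appearing in the event on the left-hand side, namely $(X(S_n + t_i), i=1,\dots,k) \in B$ and $\{S_{n+i} - S_n, i \geq 1\} \in A$, are measurable functionals of the shifted trajectory $X_{\tau_n + \cdot}$. The first is immediate; for the second, one checks by induction that $\tau_{n+i} - \tau_n$ is exactly the $i$-th value of the analogously defined $\tau$-sequence for the shifted path $s \mapsto X_{\tau_n + s}$, hence a Borel functional of that path. Writing $F(X_{\tau_n + \cdot}) := I_B\bigl((X(\tau_n + t_i))_{i=1}^k\bigr) \, I_A\bigl(\{S_{n+i}-S_n\}_{i\geq 1}\bigr)$, and using that $X_{\tau_n} = a$ a.s.\ under $P_a$, \eqref{strongMarkov} gives
\begin{equation*}
E_a\bigl[F(X_{\tau_n + \cdot}) \,\bigm|\, \mathcal{F}_{\tau_n}\bigr] \;=\; E_{X_{\tau_n}}\bigl[F(X_\cdot)\bigr] \;=\; E_a\bigl[F(X_\cdot)\bigr],
\end{equation*}
and $E_a F(X_\cdot)$ is precisely the probability appearing on the right-hand side of the claim.

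The remaining step is to pass from conditioning on $\mathcal{F}_{\tau_n}$ to conditioning on the smaller $\sigma$-algebra $\sigma(S_0,\dots,S_n) \subset \mathcal{F}_{\tau_n}$; this is automatic by the tower property because the conditional expectation above is the \emph{constant} $P_a[E_0]$, so averaging over any sub-$\sigma$-algebra reproduces the same constant. I do not anticipate a genuine technical obstacle: everything of substance is already encoded in the strong Markov property and the regenerative structure at $a$ established in Proposition \ref{posrecur}. The only point requiring mild care is the measurability verification that the shifted renewal increments $\tau_{n+i} - \tau_n$ are Borel functionals of $X_{\tau_n + \cdot}$, which is a routine consequence of the recursive definition of the $\sigma_k, \tau_k$ sequence in terms of level sets of the trajectory.
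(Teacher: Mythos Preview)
Your proposal is correct and follows essentially the same route as the paper: take $S_n := \tau_n$, apply the strong Markov property \eqref{strongMarkov} at $\tau_n$ to the indicator functional $F$ of the joint event, and then pass from $\mathcal{F}_{\tau_n}$ to the sub-$\sigma$-algebra $\sigma(S_0,\dots,S_n)$. Your write-up is in fact more complete than the paper's, which omits the verification that $S_n\to\infty$ and leaves the shift identity $F\circ\theta_{\tau_n}$ and the tower step implicit; the only blemish is the undefined symbol $E_0$ in your final paragraph, which should presumably read $P_a[(X(t_i))_i\in B,\ (S_i)_i\in A]$.
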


\begin{proof}
We take $S_n:= \tau_n$. In the strong Markov property, we take $F$ as follows:
\begin{align}
    F :=  I_{\{(X_{t_1},X_{t_2}...X_{t_k}) \in B, \; (S_1, S_2...) \in A \}}
\end{align}

Now, we note that 
\begin{align}
    F \circ \theta_{\tau_n} = I_{\{(X_{S_n +t_1},X_{S_n + t_2}...X_{S_n + t_k}) \in B, \; (S_{n+1}-S_n, S_{n+2}-S_n...) \in A \}}
\end{align}

Integrating \eqref{strongMarkov} over a set $G \in \sigma{\{S_0,S_1...S_n\}} \subset \mathcal{F}_{\tau_n} $, we get the equation in the proposition.  

\end{proof}

Thus, $(X_t)$ allows a cycle decomposition with $\left \{ {\tau_n }\right \}_{n\geq0} $ as the renewal sequence. From section \ref{Prelims}, it is clear that $\tau_n $ correspond to the time of the $n^{\text{th}}$ \textit{downcrossing} of $(X_t)$ from $b$ to $a$, so 
\begin{align*}
    C^{(d)}(t) := \sum_{n=1}^{\infty}I_{[0,t]}(\tau_n)
\end{align*}
becomes the counting function $C(t)$ corresponding to this renewal sequence, for which the renewal limit theorem holds. For computing the reaction rate $\kappa$ in \eqref{reactionrate}, we need the number $N_t$ of reactive trajectories until time $t$, which will correspond to the number $C^{(u)}(t)$ of \textit{upcrossings} of $(X_t)$ from $a$ to $b$ until time $t$. But, since there cannot be a downcrossing event without a corresponding upcrossing, we have:
\begin{align*}
    |{C^{(u)}(t) -C^{(d)}(t)}| \leq  1
\end{align*} 
Denoting $\mu = E (\tau_1 -\tau_0)$, we have:
\begin{align*}
    \underset{t\rightarrow \infty}\lim\frac{N_t}{t}& = \underset{t\rightarrow \infty}\lim\frac{C^{(u)}(t)}{t} \\ 
    &=\underset{t\rightarrow \infty}\lim\frac{C(t)}{t} \\
    &= \frac{1}{\mu}
\end{align*}
Thus, from the renewal theorem, we get the required expression for the rate of the reaction:
\begin{align}\label{kappaRenewalthm}
    \kappa = \lim_{t \rightarrow \infty}\frac{N_t}{t}  =\frac{1}{\mu} = \frac{1}{E_a \tau_1}
\end{align}
where the last equality follows from remark \ref{tau1remark}. The expression for reaction rate obtained from excursion theory \eqref{ExcRate} and renewal theory should clearly be the same, so we have:
\begin{align}\label{kappa}
    \kappa = \frac{1}{\mu} = \frac{1}{E_a \tau_1} = n'_a(\Lambda') \;\; \frac{E_a L(\tau_1,a)}{E_a \tau_1}
\end{align}
This expression implies that the product of excursion measure $n'_a(\Lambda')$ and the expected local time at $a$ is unity. From \eqref{ExpLtU}, we hence obtain:
\begin{align*}
    n'_a(\Lambda') \sim e^{-\beta (U(0)-U(a))} = e^{-\beta V_b}
\end{align*}
Thus, renewal theory allows the computation of an explicit expression for the excursion measure of the set $\Lambda'$ corresponding to transition path excursions from $a$ to $b$.

The \textit{forward} and \textit{backward} rate constant, $k_{A,B}$ and $k_{B,A}$ respectively, can be computed from \eqref{kappaRenewalthm} and remark \ref{tau1remark} using eqn. (34) and (35) in \citenum{vanden2010transition}. In the context of our paper, it follows as in the proof of proposition \ref{2limits} that $\rho_A$, the proportion of time spent by $(X_t)$ in upcrossings viz. $\sfrac{\mathcal{T}_U(t)}{t}$ as $t\rightarrow\infty$ is given as:
\begin{equation}
    \rho_A := \lim_{t\rightarrow\infty} \frac{\mathcal{T}_U(t)}{t} =  \frac{E_aT_b}{E_a\tau_1}
\end{equation}
where
\begin{align}
    \mathcal{T}_U(t) := \int_0^t \sum_{i=1}^\infty I_{(\tau_{i-1},\sigma_i)}(s) ds 
\end{align}
is the time spent in upcrossings during $(0,t)$. This yields the following expression for the forward rate constant:
\begin{align}\label{forwardrate}
    k_{A,B} = \frac{\kappa}{\rho_A} = \frac{1}{E_aT_b} = n'_a(\Lambda') \;\; \frac{E_a L(\tau_1,a)}{E_aT_b}
\end{align}

Moreover, in the context of Kramers reaction rate theory, $E_a T_b$ has been interpreted as the \textit{mean first passage time} \cite{hanggi1990reaction}, that is, the average time taken by the diffusion $(X_t)$ to leave the potential well at $a$ in $U(x)$ and reach $b$. In the overdamped regime that we consider, an explicit expression has been obtained for the same \cite{hanggi1990reaction}:
\begin{align}
    E_a T_b = \frac{1}{D}\frac{2\pi k_B T}{m \omega_a \omega_0} e^{\beta V_b}
\end{align} 
where $\omega_0 = m^{-1}U''(0)$ and $\omega_a=m^{-1}U''(a)$ are constants that depend on the second derivative of $U(x)$ at the origin and $a$ respectively. Comparing the above equation with \eqref{forwardrate}, we get:
\begin{align}
    \frac{E_a L(\tau_1,a)}{E_a T_b} \sim D\frac{m \omega_a \omega_0}{2\pi k_B T}
\end{align}
as $E_a L(\tau_1,a) \sim e^{\beta V_b}$. 

\section{Conclusion}
The factorization of the reaction rate into an excursion measure term and a local time fluctuation term, as in the RHS of \eqref{kappa} and \eqref{forwardrate}, allows a direct comparison with the Kramers (and TST) rate expression in the Smoluchowski (high friction/overdamped) limit.  We emphasize that the factorization of the rate expression is expected to hold even in cases where an explicit transition state cannot be identified in the PES, the type of problems for which TPS and TPT are useful; the existence of the limits in \eqref{kappa} is only conditional upon $(X_t)$ being positive recurrent and regenerative, which is a reasonable assumption for systems with discernible `reactants' and `products' metastable states. Since the expression for reaction rate remains structurally the same for all elementary chemical reaction, we expect this factorization to hold even for general multidimensional potential energy surfaces. Hence we believe that further work using an excursion-theoretic interpretation of chemical reactions can potentially unravel the relevant fluctuation timescales intrinsic to the underlying PES.

\begin{acknowledgments}
We thank Stuart Althorpe for reading through the introduction of the manuscript and providing useful suggestions. V.G.S.\ acknowledges support from St.\ John's College, University of Cambridge, through a Dr. Manmohan Singh Scholarship. Both V.G.S. and R.B. acknowledge funding from a SERB matrix grant (\#MTR/2017/000750).
\end{acknowledgments}

\section*{Conflict of interest}
The authors have no conflicts to disclose.

\section*{Data Availability}

Data sharing is not applicable to this article as no new data were created or analyzed in this study.

\selectlanguage{english}
\bibliographystyle{unsrt}
\bibliography{Citations}

\begin{thebibliography}{10}

\bibitem{pechukas1981transition}
Philip Pechukas.
\newblock Transition state theory.
\newblock {\em Annual Review of Physical Chemistry}, 32(1):159--177, 1981.

\bibitem{truhlar1983current}
Donald~G Truhlar, William~L Hase, and James~T Hynes.
\newblock Current status of transition-state theory.
\newblock {\em The Journal of Physical Chemistry}, 87(15):2664--2682, 1983.

\bibitem{onuchic1997theory}
Jos{\'e}~Nelson Onuchic, Zaida Luthey-Schulten, and Peter~G Wolynes.
\newblock Theory of protein folding: the energy landscape perspective.
\newblock {\em Annual review of physical chemistry}, 48(1):545--600, 1997.

\bibitem{Oxtoby_1992}
D~W Oxtoby.
\newblock Homogeneous nucleation: theory and experiment.
\newblock {\em Journal of Physics: Condensed Matter}, 4(38):7627--7650, sep
  1992.

\bibitem{Arrhenius+1889+96+116}
Svante Arrhenius.
\newblock Über die dissociationswärme und den einfluss der temperatur auf den
  dissociationsgrad der elektrolyte.
\newblock {\em Zeitschrift für Physikalische Chemie}, 4U(1):96--116, 1889.

\bibitem{EyringPolanyi+2013+1221+1246}
H.~Eyring and M.~Polanyi.
\newblock On simple gas reactions.
\newblock {\em Zeitschrift für Physikalische Chemie}, 227(11):1221--1246,
  2013.

\bibitem{eyring1935activated}
Henry Eyring.
\newblock The activated complex in chemical reactions.
\newblock {\em The Journal of Chemical Physics}, 3(2):107--115, 1935.

\bibitem{kramers1940brownian}
Hendrik~Anthony Kramers.
\newblock Brownian motion in a field of force and the diffusion model of
  chemical reactions.
\newblock {\em Physica}, 7(4):284--304, 1940.

\bibitem{hanggi1990reaction}
Peter H{\"a}nggi, Peter Talkner, and Michal Borkovec.
\newblock Reaction-rate theory: fifty years after kramers.
\newblock {\em Reviews of modern physics}, 62(2):251, 1990.

\bibitem{northrup1980stable}
Scott~H Northrup and James~T Hynes.
\newblock The stable states picture of chemical reactions. i. formulation for
  rate constants and initial condition effects.
\newblock {\em The Journal of Chemical Physics}, 73(6):2700--2714, 1980.

\bibitem{grote1980stable}
Richard~F Grote and James~T Hynes.
\newblock The stable states picture of chemical reactions. ii. rate constants
  for condensed and gas phase reaction models.
\newblock {\em The Journal of Chemical Physics}, 73(6):2715--2732, 1980.

\bibitem{pratt1986statistical}
Lawrence~R Pratt.
\newblock A statistical method for identifying transition states in high
  dimensional problems.
\newblock {\em The Journal of chemical physics}, 85(9):5045--5048, 1986.

\bibitem{bolhuis2002transition}
Peter~G Bolhuis, David Chandler, Christoph Dellago, and Phillip~L Geissler.
\newblock Transition path sampling: Throwing ropes over rough mountain passes,
  in the dark.
\newblock {\em Annual review of physical chemistry}, 53(1):291--318, 2002.

\bibitem{bolhuis2003transition}
Peter~G Bolhuis.
\newblock Transition-path sampling of $\beta$-hairpin folding.
\newblock {\em Proceedings of the National Academy of Sciences},
  100(21):12129--12134, 2003.

\bibitem{vanden2006towards}
Eric Vanden-Eijnden et~al.
\newblock Towards a theory of transition paths.
\newblock {\em Journal of statistical physics}, 123(3):503--523, 2006.

\bibitem{vanden2010transition}
Eric Vanden-Eijnden et~al.
\newblock Transition-path theory and path-finding algorithms for the study of
  rare events.
\newblock {\em Annual review of physical chemistry}, 61:391--420, 2010.

\bibitem{rajeev1990semi}
Bhaskaran Rajeev.
\newblock On semi-martingales associated with crossings.
\newblock In {\em S{\'e}minaire de Probabilit{\'e}s XXIV 1988/89}, pages
  107--116. Springer, 1990.

\bibitem{karatzas2012brownian}
Ioannis Karatzas and Steven Shreve.
\newblock {\em Brownian motion and stochastic calculus}, volume 113.
\newblock Springer Science \& Business Media, 2012.

\bibitem{zwanzig2001nonequilibrium}
R.~Zwanzig.
\newblock {\em Nonequilibrium Statistical Mechanics}.
\newblock Oxford University Press, 2001.

\bibitem{kallenberg1997foundations}
Olav Kallenberg.
\newblock {\em Foundations of modern probability}, volume~2.
\newblock Springer, 1997.

\bibitem{lu2015reactive}
Jianfeng Lu and James Nolen.
\newblock Reactive trajectories and the transition path process.
\newblock {\em Probability Theory and Related Fields}, 161(1):195--244, 2015.

\bibitem{day1992conditional}
Martin~V Day.
\newblock Conditional exits for small noise diffusions with characteristic
  boundary.
\newblock {\em The Annals of Probability}, pages 1385--1419, 1992.

\bibitem{rajeev1989crossings}
B~Rajeev.
\newblock Crossings of brownian motion: A semi-martingale approach.
\newblock {\em Sankhy{\=a}: The Indian Journal of Statistics, Series A}, pages
  251--268, 1989.

\bibitem{Rajeev1996}
B.~Rajeev.
\newblock {\em First order calculus and last entrance times}, pages 261--287.
\newblock Springer Berlin Heidelberg, Berlin, Heidelberg, 1996.

\bibitem{Karandikar2018}
Rajeeva~L. Karandikar and B.~V. Rao.
\newblock {\em Continuous Semimartingales}, pages 221--249.
\newblock Springer Singapore, Singapore, 2018.

\bibitem{BassChen10.2307/25053406}
Richard~F. Bass and Zhen-Qing Chen.
\newblock One-dimensional stochastic differential equations with singular and
  degenerate coefficients.
\newblock {\em Sankhyā: The Indian Journal of Statistics (2003-2007)},
  67(1):19--45, 2005.

\bibitem{BLEI20134337}
Stefan Blei and Hans-Jürgen Engelbert.
\newblock One-dimensional stochastic differential equations with generalized
  and singular drift.
\newblock {\em Stochastic Processes and their Applications},
  123(12):4337--4372, 2013.

\bibitem{LeGall10.1007/BFb0099122}
J.~F. Le~Gall.
\newblock One --- dimensional stochastic differential equations involving the
  local times of the unknown process.
\newblock In Aubrey Truman and David Williams, editors, {\em Stochastic
  Analysis and Applications}, pages 51--82, Berlin, Heidelberg, 1984. Springer
  Berlin Heidelberg.

\bibitem{revuz2013continuous}
Daniel Revuz and Marc Yor.
\newblock {\em Continuous martingales and Brownian motion}, volume 293.
\newblock Springer Science \& Business Media, 2013.

\bibitem{ito1972poisson}
Kiyosi It{\^o}.
\newblock Poisson point processes attached to markov processes.
\newblock In {\em Proceedings of the Sixth Berkeley Symposium on Mathematical
  Statistics and Probability (Univ. California, Berkeley, Calif., 1970/1971)},
  volume~3, pages 225--239, 1972.

\bibitem{maisonneuve1975exit}
Bernard Maisonneuve.
\newblock Exit systems.
\newblock {\em The Annals of Probability}, pages 399--411, 1975.

\bibitem{bhaskaran2022asymptotic}
Rajeev Bhaskaran.
\newblock Asymptotic distribution of brownian excursions into an interval.
\newblock {\em arXiv preprint arXiv:2205.11877}, 2022.

\bibitem{rajeev2013brownian}
B~Rajeev and KB~Athreya.
\newblock Brownian crossings via regeneration times.
\newblock {\em Sankhya A}, 75(2):194--210, 2013.

\bibitem{resnick1992adventures}
Sidney~I Resnick.
\newblock {\em Adventures in stochastic processes}.
\newblock Springer Science \& Business Media, 1992.

\end{thebibliography}
\end{document}